%% file: main.tex
\documentclass[pdflatex,a4paper,ja=minimal]{bxjsarticle}

\usepackage{graphicx}
\usepackage{amsmath}
\usepackage{amssymb}
\usepackage{amsthm}
\usepackage{booktabs}
\usepackage{url}
\usepackage{here}
\usepackage{float}
\usepackage{placeins}
\usepackage{algorithm}
\usepackage{algpseudocode}
\usepackage{longtable}
\usepackage{pgfplots}
\usepackage{siunitx}
\usepackage{multirow}
\usepackage{xcolor}
\usepackage{colortbl}
\usepackage{subcaption}
\DeclareCaptionLabelFormat{spaced}{#1~#2}
\usepackage[hidelinks]{hyperref}
\pgfplotsset{compat=1.18}

\theoremstyle{plain}
\newtheorem{lemma}{Lemma}
\newtheorem{theorem}{Theorem}

\newcommand{\minVal}{\mathit{minVal}}
\newcommand{\maxVal}{\mathit{maxVal}}

\title{
Shared-Memory Range-Tiled CDF Sort\\
for Small-Range Integer Keys on GPUs
}
\author{
Kento Ando$^{1}$ \quad
Kaito Takase$^{2}$ \quad
Noriyuki Fujimoto$^{3}$ \quad
Koichi Wada$^{4}$\\[0.5em]
\small $^{1}$Major in Applied Informatics, Graduate School of Science and Engineering, Hosei University\\
\small $^{2}$THIRD, Inc.\\
\small $^{3}$Department of Core Informatics, Graduate School of Informatics, Osaka Metropolitan University\\
\small $^{4}$Information Media Education and Research Center, Hosei University
}
\date{}

\begin{document}
\maketitle

\begin{abstract}
We study unstable integer sorting on GPUs for arrays whose elements lie in a known integer range.
Focusing on counting-sort-based methods that determine the output interval of each value from its frequency and the prefix sums of the frequencies, we propose and evaluate Range-Tiled CDF sort (RT-CDF), which partitions the possible value range into small intervals, called tiles, that fit in shared memory.
For each tile, RT-CDF constructs a histogram, computes its prefix sum as a local CDF, and directly generates the output array from the local CDF.
We compare RT-CDF against three baselines: CUB \texttt{DeviceRadixSort}, whose processed bit range is restricted to $[0,\lceil\log_2 R\rceil)$ to exploit the known range size $R$; Ref-H-P sort; and an implementation based on the algorithm of Kolonias et al.
Experiments on an NVIDIA GeForce RTX 4090 with range sizes from $R=2^7$ to $2^{18}$, input sizes from $n=10^6$ to $10^9$, and uniformly distributed, normally distributed, and all-equal inputs show that RT-CDF outperforms the baselines over a broad set of conditions for small to medium ranges, achieving a maximum speedup of 4.39 over the fastest baseline.
For $R=2^{18}$, however, at least one baseline outperforms RT-CDF for every evaluated input size and input distribution, showing that the cost of histogram construction limits the applicability of RT-CDF to larger ranges.
\end{abstract}

\input{sections/01_introduction}
\input{sections/02_background}
\input{sections/03_method}
\input{sections/04_experimental_evaluation}
\input{sections/05_conclusion}
\bibliographystyle{unsrt}
\bibliography{references}
\end{document}

%% file: sections/01_introduction.tex
\section{Introduction}
\label{sec:introduction}

Sorting is a fundamental operation used in many computational tasks, including database processing, graph processing, simulation, and image processing.
GPUs can process large numbers of elements in parallel, and many parallel sorting algorithms, including radix sort, merge sort, and sample sort, have been studied and implemented~\cite{arkhipov2017survey,satish2009gpusort}.
GPU radix sorting for integers is particularly important in practice, and recent implementations such as Onesweep reduce the amount of data read from and written to global memory~\cite{adinets2022onesweep}.
CUB \texttt{DeviceRadixSort}, provided by NVIDIA, is also widely used as a highly optimized GPU sorting implementation~\cite{nvidia_cub_radixsort}.

When the range of possible input values is known in advance and is sufficiently small relative to the input size, however, this additional restriction can be exploited.
In particular, counting sort can count the occurrences of each value and use prefix sums of the counts to determine the interval occupied by each value in the output array.
If stability is not required, individual input elements need not be moved to their respective ranks.
A correct sorted output can instead be obtained by determining the contiguous interval occupied by each value and writing that value throughout the interval.
We denote the number of input elements by $n$ and the size of the known integer range $[\minVal,\maxVal)$ by $R=\maxVal-\minVal$.
The problem considered in this paper is to sort $n$ integers from this range into nondecreasing order without preserving the relative order of equal values.

Previous work on range-restricted integer sorting on GPUs includes the counting sort of Kolonias et al.~\cite{kolonias2011countsort}, a GPU implementation of H-P sort based on Eisenstat's algorithm~\cite{eisenstat2007sumcrcw,kozakai2021hp}, and Ref-H-P sort (hereafter RefHP), which refines H-P sort for GPU execution~\cite{takase2024hpcasia}.
These methods share the property that a histogram and its prefix sums provide information about the output position or output interval of each value.
We focus on this common structure, in which output intervals are determined from a histogram and prefix sums.
We then propose and evaluate Range-Tiled CDF sort (RT-CDF), which partitions the possible integer range into multiple small intervals, called range tiles, that fit in shared memory, and constructs a histogram and a cumulative distribution function (CDF) for each tile.

In this paper, the term CDF denotes the cumulative-count array obtained as the prefix sum of a histogram.
It is not a probability distribution normalized to the interval $[0,1]$; rather, each entry gives the number of input values no greater than the corresponding value.
The CDF constructed within a tile, which we call a local CDF, determines the output interval occupied by each value in that tile.
Consequently, the output can be generated without rescanning the input array.

The contributions of this work are as follows.

\begin{itemize}
    \item We develop RT-CDF, which partitions the value range into tiles that fit in shared memory and directly generates the output array from the local CDF of each tile.
    \item We clarify the algorithmic relationship between RT-CDF and a counting-sort implementation based on Kolonias et al., H-P sort, and RefHP, and identify how RT-CDF differs from existing methods that determine output intervals from a CDF.
    \item On an NVIDIA GeForce RTX 4090, we compare RT-CDF with CUB \texttt{DeviceRadixSort}, RefHP, and an implementation based on Kolonias et al., and identify the range sizes and input sizes for which RT-CDF is effective, as well as its limitations near $R=2^{18}$.
\end{itemize}

%% file: sections/02_background.tex
\section{Background and Motivation}
\label{sec:background}

\subsection{Problem Setting and Counting-Sort-Based Baselines}
\label{subsec:problem-baselines}

We consider unstable sorting of arrays whose possible integer values lie in a known range.
The input is an array $A[0..n-1]$ of $n$ integers, all of which belong to a known range $[\minVal,\maxVal)$.
Let $R=\maxVal-\minVal$.
For simplicity of notation, we subtract $\minVal$ from every element and continue to denote the resulting array by $A$, so that $0 \leq A[i] < R$.
The output is an array $B[0..n-1]$ containing the elements of $A$ in nondecreasing order.
If value $v\in[0,R)$ occurs $c_v$ times, then its positions in the output array form the contiguous interval
\[
  \left[\sum_{u=0}^{v-1} c_u,\ \sum_{u=0}^{v} c_u\right).
\]
Thus, it is unnecessary to compute a separate rank for every input element; it suffices to write each value into its corresponding output interval.

The most basic method for this problem setting is counting sort.
It first constructs a histogram $H[0..R-1]$ satisfying $H[v]=c_v$, and then computes its prefix sums to determine the output interval of each value.
For an array $X[0..\ell-1]$, its prefix-sum array $Y[0..\ell-1]$ is defined by
\[
Y[i] = \sum_{j=0}^{i} X[j].
\]
Although this procedure is straightforward sequentially, histogram updates on a GPU may concentrate additions at the same address, in which case contention among atomic operations can dominate performance.
Furthermore, a stable counting sort must determine the local rank of each input element among elements with the same value and combine that rank with a prefix sum to determine its output position.
By contrast, the unstable integer-array sorting problem considered here only requires the output interval of each value, allowing the output-generation step to be simplified.

As an early study of counting sort on GPUs, Sun and Ma presented a CUDA implementation of counting sort.
Using a 2~GHz AMD Athlon 64 X2 3800+ CPU and an NVIDIA GeForce 9600 GSO GPU, they reported an 8.08-fold speedup over the corresponding CPU implementation for five million integers~\cite{sun2009countsort}.
Faujdar and Ghrera compared sequential counting sort with CUDA-based parallel counting sort under six input conditions.
Their sequential implementation ran on a 2.60~GHz Intel Core i5-3230M, whereas the CUDA implementation ran on a system with a 2.93~GHz Intel Core i3-530 and an NVIDIA GeForce GTX 460; they reported a maximum speedup of 66 for inputs containing from 1,000 to 10,000,000 elements~\cite{faujdar2016countsort}.
These studies show both that counting sort can be effective on GPUs and that its performance can depend strongly on the input distribution and the value range.

Kolonias et al. designed an integer counting sort for CUDA GPUs whose final output stage avoids synchronization by writing different values to disjoint intervals.
They also considered constructing one partial histogram per thread block in shared memory when the range fits in shared memory~\cite{kolonias2011countsort}.
Our method extends this shared-memory histogram idea through range tiling and differs further in that it directly determines the value at each output position from a local CDF.

Building on the idea of shared-memory partial histograms, we partition the range into small intervals of width $T$, called range tiles, that fit in shared memory, thereby increasing the range sizes that can be handled.
For each range tile, we construct a CDF---the cumulative counts within that tile---and load the resulting local CDF into shared memory to determine the value corresponding to each output position.

The output-generation method of Kolonias et al. can be simplified for our problem setting as shown in Algorithm~\ref{alg:kolonias-countsort}.
First, a histogram of the input array is computed, and its prefix sums give the output starting position $P[v]$ of each value.
Then, for every value $v$, the interval $[P[v],P[v+1])$ is filled with $v$.
Because different values are written to disjoint output intervals, the final step requires no synchronization between values.

\begin{algorithm}[H]
\caption{Simplified structure of the counting sort of Kolonias et al.~\cite{kolonias2011countsort}}
\label{alg:kolonias-countsort}
\begin{algorithmic}[1]
\Require Input array $A[0..n-1]$ and range size $R$
\Ensure Sorted array $B[0..n-1]$
\State $H[0..R-1] \gets$ histogram of $A$
\State $P[0] \gets 0$
\State $P[1..R] \gets$ prefix sums of $H$
\ForAll{$v\in[0,R)$ \textbf{in parallel}}
  \For{$p \in [P[v],P[v+1])$}
    \State $B[p] \gets v$
  \EndFor
\EndFor
\State \Return $B$
\end{algorithmic}
\end{algorithm}

Algorithm~\ref{alg:kolonias-countsort} is directly related to our method in that it generates the output array as one interval per value.
If the writing of output intervals is parallelized over values, however, the amount of work assigned to each value becomes imbalanced when occurrence counts vary substantially.
Kolonias et al. explain that, under a uniform distribution, the average number of iterations is approximately $N/k$, whereas if 50\% of the input has the same value, the computation assigned to that value must perform $N/2$ writes~\cite{kolonias2011countsort}.
Our implementation retains the idea of output intervals for individual values, but assigns parallelism to output positions and determines the value of each output position from a local CDF loaded into shared memory.
The goal is to avoid the extreme workload imbalance that can arise when one processing unit is assigned to each value.

One practical baseline in this study is CUB \texttt{DeviceRadixSort} (hereafter CUB)~\cite{nvidia_cub_radixsort}.
CUB allows the bit range used for sorting to be specified by \texttt{begin\_bit} and \texttt{end\_bit}.
Because the normalized range size $R$ is known, we provide CUB with the relevant bit range and invoke it with \texttt{begin\_bit}$=0$ and \texttt{end\_bit}$=\lceil\log_2 R\rceil$.
This is a more appropriate baseline for our problem than always processing all 32 bits.
CUB is, however, a general-purpose implementation that provides stable sorting, a stronger functionality than the unstable integer sorting considered here.
The comparison with CUB therefore evaluates how effective an implementation specialized to our more restricted problem can be relative to a practical general-purpose GPU sorter.

\subsection{H-P Sort, RefHP, and Localizing CDFs}
\label{subsec:hp-refhp-motivation}

H-P sort is an integer-sorting algorithm based on histograms and prefix sums~\cite{eisenstat2007sumcrcw,kozakai2021hp}.
Kozakai et al. position H-P sort and its extensions as counting-sort-based methods, and we likewise treat H-P sort as one implementation family of counting-sort-based integer sorting.
H-P sort is not merely a direct GPU translation of ordinary counting sort; its distinctive feature is that it reconstructs output values from CDF boundary information.
Let $H_1$ be the histogram of the input array and $P_1$ its prefix-sum array.
Because $P_1[v]$ is the number of elements no greater than $v$, $P_1$ can be regarded as a CDF that gives the right endpoint of the output interval for each value.
H-P sort constructs another histogram of $P_1$ and applies a prefix sum to that histogram to obtain the sorted output.
Algorithm~\ref{alg:hp-sort} shows this procedure.

\begin{algorithm}[H]
\caption{Basic structure of H-P sort}
\label{alg:hp-sort}
\begin{algorithmic}[1]
\Require Input array $A[0..n-1]$ and range size $R$
\Ensure Sorted array $B[0..n-1]$
\State $H_1[0..R-1] \gets$ histogram of $A$
\State $P_1[0..R-1] \gets$ prefix sums of $H_1$
\State $H_2[0..n] \gets$ histogram of $P_1$
\State $B[0..n-1] \gets$ prefix sums of $H_2[0..n-1]$
\State \Return $B$
\end{algorithmic}
\end{algorithm}

Algorithm~\ref{alg:hp-sort} can be interpreted as follows.
Because $P_1[v]$ is the right endpoint of the output interval of value $v$, the positions at which values of $P_1$ occur indicate boundaries at which the value in the output array changes.
The array $H_2$ counts these boundary positions, and its prefix sums give the number of boundaries up to each output position $i$.
This number is precisely the value that should be written at output position $i$.
For example, if values 0, 1, and 2 occur 2, 3, and 1 times, respectively, then $P_1=(2,5,6)$.
The boundary positions of $H_2$ are 2, 5, and 6, and the prefix sums of $H_2[0..5]$ are $(0,0,1,1,1,2)$, yielding the sorted array.
Thus, H-P sort reconstructs output values from CDF boundary information.
Eisenstat's integer-sorting algorithm, on which H-P sort is based, runs in $O(\log^* n)$ time on a Sum-CRCW PRAM~\cite{eisenstat2007sumcrcw}.
On a GPU, however, histogram updates for identical values cause atomic contention, so the theoretical parallelism of the PRAM algorithm is not directly reflected in practical performance.

Kozakai et al. implemented H-P sort on a GPU and compared it with CUB~\cite{kozakai2021hp}.
For $n$ from $10^6$ to $10^7$, $minVal=0$, and $maxVal=n/50$, H-P sort was up to 2.97 times faster than CUB.
They also proposed 1-H-P sort, which omits part of H-P sort for inputs without duplicates, and 0-compressed H-P sort, which compresses intervals of absent values in the histogram.
They reported that 1-H-P sort was up to 2.01 times faster than CUB for distinct inputs, and that 0-compressed H-P sort was up to 2.73 times faster when the number of distinct values actually present was small~\cite{kozakai2021hp}.
An important result of that work is that H-P sort is not merely a theoretical algorithm: it identifies conditions under which H-P sort is competitive with CUB for range-restricted integer sorting on GPUs.
At the same time, its performance varies substantially with the input size $n$, the range size $R$, and the number of distinct values that actually occur.
In particular, when many threads concurrently add to the same histogram bin, atomic contention increases and the first histogram computation becomes a bottleneck.

RefHP, proposed by Takase et al., refines H-P sort to mitigate this GPU implementation bottleneck~\cite{takase2024hpcasia}.
RefHP achieved a maximum speedup of 3.45 over the original H-P sort and outperformed CUB even under conditions where H-P sort was slower than CUB.
When constructing the CDF, RefHP extends the bin for each value into multiple bins and distributes input elements among these extended bins.
If the extension factor is $e$, it allocates $e$ bins for value $v$ and adds the element at input position $i$ to a location such as $ve+(i\bmod e)$.
Consequently, even when a value appears many times, updates are less likely to concentrate on a single bin.
A prefix sum is then computed over the extended histogram; by referring to the last extended bin of each value, the original CDF can be recovered.
This refinement is characterized by distributing atomic contention in global memory through bin extension.

Unlike the original H-P sort in Algorithm~\ref{alg:hp-sort}, RefHP also provides an implementation that does not construct the second histogram~\cite{takase2024hpcasia}.
Let $H_1[0..R-1]$ be the input histogram and let $P_1[0..R-1]$ be the CDF obtained by computing its prefix sums.
For $d>0$,
\[
P_1[d]-P_1[d-1]=H_1[d].
\]
Thus, $P_1[d]>P_1[d-1]$ if and only if value $d$ occurs in the input array.
When $d$ occurs, $P_1[d-1]$ is the number of input elements smaller than $d$, and hence is the starting position of $d$ in the sorted array.
The output array $B$ of length $n$ is therefore initialized to zero, and for every value $d>0$ at which the CDF has a step, RefHP writes
\[
B[P_1[d-1]]\gets d.
\]
At this point, each value is stored only at the first position of its output interval, while all other entries remain zero.
Finally, the prefix maximum of $B$ is computed:
\[
B[i]=\max_{0\le j\le i}B[j].
\]
This propagates each value up to the position immediately preceding the start of the next value's interval and yields the sorted array $B$.
Thus, the construction and prefix sum of the second histogram in Algorithm~\ref{alg:hp-sort} are replaced by Algorithm~\ref{alg:ref-hp}.

\begin{algorithm}[H]
\caption{Output generation in RefHP using CDF step detection and a prefix maximum~\cite{takase2024hpcasia}}
\label{alg:ref-hp}
\begin{algorithmic}[1]
\Require CDF $P_1[0..R-1]$ and input size $n$
\Ensure Sorted array $B[0..n-1]$

\State $B[0..n-1] \gets 0$

\ForAll{$d\in[1,R)$ \textbf{in parallel}}
  \If{$P_1[d]>P_1[d-1]$}
    \State $B[P_1[d-1]]\gets d$
  \EndIf
\EndFor

\State \Call{InclusiveMaximumInPlace}{$B$}
\State \Return $B$
\end{algorithmic}
\end{algorithm}

This procedure is an important improvement for efficiently generating the H-P output on a GPU and corresponds to the counting-sort idea of producing an output interval for every value.
Taken together, the counting sort of Kolonias et al., H-P sort, and RefHP can all be viewed as methods that construct a CDF from a histogram in some form and determine output intervals from that CDF.

From our perspective, the important property of H-P sort and RefHP is that the CDF obtained from the first histogram and its prefix sums determines the output array.
Accordingly, the central design choice in our method is not to handle one CDF for the entire range, but to construct a local CDF for each range tile and retain only the relevant tile's local CDF in shared memory during output generation.

GPU histogram computations commonly reduce atomic contention by constructing multiple partial histograms and merging them afterward.
Sakharnykh presented a two-stage method in which each thread block constructs a partial histogram for its assigned input and the partial histograms are subsequently merged into the final histogram.
In particular, shared-memory partial histograms were shown to be effective on Maxwell-generation GPUs~\cite{sakharnykh2015histogram}.
Nugteren et al. considered storing independent histograms for each warp or thread in shared memory and demonstrated the trade-off between reducing contention and increasing shared-memory consumption.
They also showed that bank conflicts can arise depending on the number of threads and the shared-memory layout, limiting the number of thread blocks that can execute concurrently~\cite{nugteren2011histogram}.
Henriksen et al. studied generalized histogram computations in which bin updates are extended beyond integer addition.
They considered the trade-off between contention reduction through multiple partial histograms and memory consumption, as well as a method that partitions the bin range and scans the input multiple times when the histogram does not fit in shared memory~\cite{henriksen2020generalizedhistogram}.
Although these studies do not directly address H-P sort, they provide important background for a design that handles local CDFs in shared memory.

We partition the range into tiles of width $T$ that fit in shared memory and construct a histogram and a local CDF for each tile.
The local CDF represents output intervals for values within the tile; adding an offset equal to the number of elements smaller than the tile yields positions in the complete output array.
This structure preserves the idea of existing methods that determine output intervals from a histogram and prefix sums while allowing a CDF to be used locally in shared memory.
Our objective is to determine the range sizes, input sizes, and input distributions for which this range tiling is effective relative to CUB, RefHP, and an implementation based on the algorithm of Kolonias et al.

%% file: sections/03_method.tex
\section{Proposed Method: Range-Tiled CDF Sort}
\label{sec:method}

This section describes the proposed algorithm and its GPU implementation.
The algorithm is an unstable integer-sorting method based on counting sort that partitions the value range into multiple tiles.
Rather than handling a CDF for the entire range at once, it constructs a local CDF for each tile that fits in shared memory and generates the output array from these local CDFs.
As in the preceding section, input values are normalized by subtracting $\minVal$.
Let $A[0..n-1]$ be the input array, $B[0..n-1]$ the output array, $R$ the range size, and $T$ the tile width.
The number of tiles is $m=\lceil R/T\rceil$.
For tile $t$, let its starting value be $b_t=tT$ and its width be $w_t=\min(T,R-b_t)$.
Tile $t$ is responsible for the range interval $[b_t,b_t+w_t)$.

\subsection{Algorithm}
\label{subsec:algorithm}

The proposed algorithm consists of four operations: histogram construction for each range tile, local-CDF construction, tile-offset construction, and output generation from the local CDFs.
Algorithm~\ref{alg:tiledcs-overview} gives an overview.
The array $H[0..R-1]$ in global memory initially stores the histogram of each tile and, after the prefix-sum computation, stores the local CDFs.
The array $O[0..m-1]$ in global memory stores the output starting position of each tile.

\begin{algorithm}[H]
\caption{Overview of RT-CDF}
\label{alg:tiledcs-overview}
\begin{algorithmic}[1]
\Require Input array $A[0..n-1]$, range size $R$, and tile width $T$
\Ensure Sorted array $B[0..n-1]$
\State $m \gets \lceil R/T\rceil$
\State $H[0..R-1] \gets 0$
\ForAll{$t \in [0,m)$ \textbf{in parallel}}
  \State \Call{BuildTileHistogram}{$A,H,n,b_t,w_t$}
\EndFor
\ForAll{$t \in [0,m)$ \textbf{in parallel}}
  \State \Call{InclusiveScanInPlace}{$H[b_t..b_t+w_t-1]$} \Comment{Construct a local CDF}
\EndFor
\State \Call{BuildTileOffsets}{$H,O,R,T,m$}
\For{$t\in[0,m)$}
  \State \Call{GenerateOutputFromLocalCDF}{$H,O,B,t,b_t,w_t$}
\EndFor
\State \Return $B$
\end{algorithmic}
\end{algorithm}

Algorithm~\ref{alg:tiledcs-phase1} shows histogram construction for one range tile.
Each thread block has an array $S[0..w-1]$ of tile width $w$ in shared memory.
After initializing $S$ to zero, the threads scan multiple elements of the input array $A$.
Only when input value $A[i]$ belongs to the current tile $[b,b+w)$ do they perform \texttt{atomicAdd} on $S[A[i]-b]$.
The histogram obtained within each block in shared memory is finally added to $H[b..b+w-1]$ in global memory.

After histogram construction is complete, prefix sums are computed over $H[b..b+w-1]$, and the result is written back to the same region.
This produces the local CDF for the range tile.

\begin{algorithm}[H]
\caption{Histogram construction for one range tile}
\label{alg:tiledcs-phase1}
\begin{algorithmic}[1]
\Procedure{BuildTileHistogram}{$A,H,n,b,w$}
  \State Allocate $S[0..w-1]$ in shared memory
  \State $S[0..w-1] \gets 0$
  \ForAll{$i \in [0,n)$ \textbf{in parallel}}
    \State $x \gets A[i]-b$
    \If{$0 \leq x < w$}
      \State \texttt{atomicAdd}$(S[x],1)$
    \EndIf
  \EndFor
  \ForAll{$x \in [0,w)$ \textbf{in parallel}}
    \State \texttt{atomicAdd}$(H[b+x],S[x])$
  \EndFor
\EndProcedure
\end{algorithmic}
\end{algorithm}

Once the local CDFs have been constructed, the number $N_t$ of elements in tile $t$ is given by the final local-CDF entry $H[b_t+w_t-1]$.
Because the tiles are ordered by value, prefix sums of the numbers of elements in the tiles give the output starting position $O[t]$ of tile $t$.
Algorithm~\ref{alg:tiledcs-offset} shows this operation.
For the reason explained in Section~\ref{subsec:gpu-implementation}, Algorithm~\ref{alg:tiledcs-offset} constructs the offsets sequentially.

\begin{algorithm}[H]
\caption{Construction of tile offsets}
\label{alg:tiledcs-offset}
\begin{algorithmic}[1]
\Procedure{BuildTileOffsets}{$H,O,R,T,m$}
  \State $offset \gets 0$
  \For{$t=0$ to $m-1$}
    \State $O[t] \gets offset$
    \State $offset \gets offset + H[b_t+w_t-1]$
  \EndFor
\EndProcedure
\end{algorithmic}
\end{algorithm}

Output generation uses only the local CDFs and does not rescan the input array.
Algorithm~\ref{alg:tiledcs-phase2} shows the processing of one tile.
First, the local CDF $H[b..b+w-1]$ in global memory is loaded into $C[0..w-1]$ in shared memory.
For each output position $p$ within the tile, a binary search finds the smallest $x$ satisfying $C[x]\ge p+1$, and $b+x$ is written to $B[O[t]+p]$.

\begin{algorithm}[H]
\caption{Output generation from a local CDF}
\label{alg:tiledcs-phase2}
\begin{algorithmic}[1]
\Procedure{GenerateOutputFromLocalCDF}{$H,O,B,t,b,w$}
  \State Allocate $C[0..w-1]$ in shared memory
  \State $C[0..w-1] \gets H[b..b+w-1]$
  \State $tileStart \gets O[t]$
  \State $tileCount \gets C[w-1]$
  \ForAll{$p \in [0,tileCount)$ \textbf{in parallel}}
    \State $x \gets \min\{j \mid C[j] \geq p+1\}$
    \State $B[tileStart+p] \gets b+x$
  \EndFor
\EndProcedure
\end{algorithmic}
\end{algorithm}

Like RefHP, the proposed algorithm determines output intervals from a CDF.
During output generation, however, it does not compute a prefix maximum over an array of length $n$; instead, it loads a local CDF into shared memory and directly determines output values.
When the number of tiles is small, the costs of per-tile kernel launches and binary searches are relatively low, and avoiding a prefix maximum over the entire output array provides a substantial advantage.
Histogram construction, however, scans the input array once for every tile, so the scanning overhead increases with the number of tiles $m$.
Because each tile scans the entire input array during histogram construction, the amount of data read from the input array increases with the number of tiles.
This is the principal performance limitation of the proposed algorithm.

We next give an example of local CDFs and tile offsets.
Let the input array be $A=(3,1,2,1,3,0,1)$, the range size be $R=4$, and the tile width be $T=2$.
Tile 0 handles values 0 and 1, whereas tile 1 handles values 2 and 3.
The histogram, local CDF, and tile offset of each tile are shown in Table~\ref{tab:tiled-example}.
For tile 0, the local CDF is $(1,4)$, so value 0 is written at within-tile position $p=0$ and value 1 at positions $p=1,2,3$.
For tile 1, the offset is 4 and the local CDF is $(1,3)$, so value 2 is written at global position 4 and value 3 at positions 5 and 6.
The resulting output is therefore $(0,1,1,1,2,3,3)$.
As this example illustrates, the proposed algorithm does not compute the rank of every input element; it determines the output interval of each value from a local CDF and writes values through parallel processing over output positions.

\begin{table}[H]
\centering
\small
\caption{Example of local CDFs and tile offsets}
\label{tab:tiled-example}
\begin{tabular}{c c c c c}
\toprule
Tile & Values & Histogram & Local CDF & Offset \\
\midrule
0 & 0,1 & $(1,3)$ & $(1,4)$ & 0 \\
1 & 2,3 & $(1,2)$ & $(1,3)$ & 4 \\
\bottomrule
\end{tabular}
\end{table}

\subsection{Correctness and Complexity Analysis}
\label{subsec:correctness-analysis}

Correctness follows from the facts that a local CDF correctly represents output intervals within a tile, tile offsets correctly position the tiles relative to one another, and output generation uniquely selects the value corresponding to each output position.
Let $c_v$ be the number of occurrences of value $v$.
For tile $t$, let $b_t$ be its starting value, $w_t$ its width, and $C_t[0..w_t-1]$ its local CDF.
For convenience, define $C_t[-1]=0$.

\begin{lemma}[Correctness of the local CDF]
\label{lem:local-cdf}
After local-CDF construction, for every tile $t$ and every $x\in[0,w_t)$,
\[
  C_t[x] = \sum_{j=0}^{x} c_{b_t+j}.
\]
\end{lemma}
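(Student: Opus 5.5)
The proof has two stages. First we show that after all calls to \textsc{BuildTileHistogram} finish, $H[b_t+x]=c_{b_t+x}$ for every tile $t$ and every $x\in[0,w_t)$. Then we invoke the definition of the inclusive scan applied to the segment $H[b_t..b_t+w_t-1]$. The tiles $[b_t,b_t+w_t)$ for $t\in[0,m)$ partition $[0,R)$, since $b_t=tT$, $w_t=\min(T,R-b_t)$ and $m=\lceil R/T\rceil$. Hence each global index $v\in[0,R)$ belongs to exactly one tile, and that tile's segment is the only one that touches $H[v]$ in either phase. This disjointness is what allows the different tiles to be analysed independently even though they run in parallel.

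For the histogram stage, fix a tile $t$ and an offset $x\in[0,w_t)$, and set $v=b_t+x$. $H[v]$ starts at $0$, and only the procedure call for tile $t$ adds to it. In that call, the input indices $[0,n)$ are split among the thread blocks. Each block's shared array $S$ is zeroed and then incremented once for each scanned $i$ with $A[i]-b_t=x$, that is, with $A[i]=v$. The range check $0\le A[i]-b_t<w_t$ rejects exactly the values outside the tile, so no other index reaches $S[x]$. Because every update is an \texttt{atomicAdd}, which is an indivisible increment, no updates are lost. Therefore, once the block's loop is done, the block's $S[x]$ equals the number of indices $i$ in its portion with $A[i]=v$. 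Each block then atomically adds $S[x]$ into $H[v]$. Summing over blocks, whose portions partition $[0,n)$, gives $H[v]=|\{i: A[i]=v\}|=c_v$.

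For the scan stage, the inclusive scan in place over $H[b_t..b_t+w_t-1]$ yields, by the definition $Y[i]=\sum_{j\le i}X[j]$ from Section~\ref{subsec:problem-baselines}, $C_t[x]=H[b_t+x]_{\text{new}}=\sum_{j=0}^{x}c_{b_t+j}$. This is the claimed identity, and it is consistent with the convention $C_t[-1]=0$.

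The main obstacle is not the algebra but the execution model. The argument assumes that the scan phase starts only after every histogram block has finished, and that the inter-block accumulation is race-free. I would handle this by making explicit the assumption that Algorithm~\ref{alg:tiledcs-overview}'s phases are separated by kernel boundaries, or equivalently by global synchronization. I would also note that each block performs a block-level barrier between its counting loop and the global-add loop. Under these assumptions, atomicity alone guarantees the counts, and the proof goes through.
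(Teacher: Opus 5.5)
Your proposal is correct and follows essentially the same route as the paper's proof. First, before the scan, entry $x$ of tile $t$'s histogram equals $c_{b_t+x}$, because exactly the inputs with $A_i=b_t+x$ increment it; then the definition of the inclusive prefix sum gives the claimed identity. Your additional material (the tiles partitioning $[0,R)$, atomicity of the shared- and global-memory additions, the blocks' partition of $[0,n)$, and phase separation by kernel boundaries) makes explicit the execution-model assumptions that the paper's short argument leaves implicit.
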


\begin{proof}
During histogram construction, one is added to bin $x$ of the tile histogram if and only if an input value $A_i$ belongs to tile $t$, that is, if there exists an $x$ such that $A_i=b_t+x$.
Therefore, before the prefix-sum computation, entry $x$ of the tile histogram is $c_{b_t+x}$.
The prefix-sum computation that constructs the local CDF consequently makes entry $x$ equal to the sum of the occurrence counts from entry 0 through entry $x$.
The lemma follows.
\end{proof}

\begin{lemma}[Correctness of tile offsets]
\label{lem:tile-offset}
The value $O[t]$ obtained by Algorithm~\ref{alg:tiledcs-offset} is equal to the number of input elements whose values are smaller than the range handled by tile $t$; that is,
\[
  O[t] = \sum_{u=0}^{b_t-1} c_u.
\]
\end{lemma}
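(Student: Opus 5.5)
The proof will be by induction on $t$, establishing the loop invariant of Algorithm~\ref{alg:tiledcs-offset}: at the start of iteration $t$, the variable $offset$ equals $\sum_{u=0}^{b_t-1} c_u$. Since the algorithm assigns $O[t]\gets offset$ at the start of that iteration, this invariant gives the statement directly. Throughout, I will use $b_t=tT$ and $w_t=\min(T,R-b_t)$, and the fact that $b_{t+1}=b_t+w_t$ for every $t<m-1$. This holds because $w_t=T$ whenever $t<m-1$: the definition $m=\lceil R/T\rceil$ gives $(m-1)T<R$, so only the last tile can be truncated.

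For the base case $t=0$, the variable $offset$ is initialized to $0$ and $b_0=0$. The sum $\sum_{u=0}^{-1}c_u$ is empty, hence $0$, so $O[0]=0$ as required.

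For the inductive step, assume $offset=\sum_{u=0}^{b_t-1}c_u$ at the start of iteration $t$. The loop then adds $H[b_t+w_t-1]$, which is the local-CDF entry $C_t[w_t-1]$. By Lemma~\ref{lem:local-cdf} with $x=w_t-1$,
\[
C_t[w_t-1]=\sum_{j=0}^{w_t-1}c_{b_t+j}=\sum_{u=b_t}^{b_t+w_t-1}c_u .
\]
Adding this to the inductive hypothesis gives $\sum_{u=0}^{b_t+w_t-1}c_u=\sum_{u=0}^{b_{t+1}-1}c_u$, which is exactly the invariant for $t+1$.

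The only delicate point is the identification $b_{t+1}=b_t+w_t$, which requires that the tiles partition $[0,R)$ contiguously with no gaps or overlap. This is the fact about $w_t=T$ for non-final tiles noted above. It matters only for $t\le m-2$, since for $t=m-1$ no further offset is assigned. I also need to observe that $H[b_t..b_t+w_t-1]$ holds $C_t$ exactly, which is true because the scans of different tiles act on disjoint subarrays. With these two observations, the induction is immediate.
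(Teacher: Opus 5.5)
Your proposal is correct and follows essentially the same route as the paper, which observes that $O[t]$ is the prefix sum of the tile counts $N_s=C_s[w_s-1]$ (via Lemma~\ref{lem:local-cdf}) over $s<t$ and that consecutive tiles cover $[0,b_t)$ in value order. Your loop-invariant induction is just this argument unrolled, and your explicit check that $w_t=T$ for $t<m-1$ (hence $b_{t+1}=b_t+w_t$) makes precise the contiguity that the paper summarizes as ``disjoint intervals in value order.''
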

\begin{proof}
Let $N_s$ be the number of elements in tile $s$.
By Lemma~\ref{lem:local-cdf}, $N_s=C_s[w_s-1]$.
Algorithm~\ref{alg:tiledcs-offset} computes $O[t]$ as the prefix sum of $N_s$ over tiles 0 through $t-1$.
Because the tiles handle disjoint intervals in value order, this prefix sum equals the total number of occurrences of values from 0 through $b_t-1$.
The lemma follows.
\end{proof}

\begin{lemma}[Output interval of each value]
\label{lem:value-interval}
Value $b_t+x$ occupies the interval
\[
  [O[t]+C_t[x-1],\ O[t]+C_t[x])
\]
in the output array.
These intervals are mutually disjoint and ordered by value, and the interval for $b_t+x$ has length $c_{b_t+x}$.
\end{lemma}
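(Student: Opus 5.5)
The plan is to combine Lemma~\ref{lem:local-cdf} and Lemma~\ref{lem:tile-offset} and show that the stated interval coincides with the global interval $\bigl[\sum_{u<v}c_u,\ \sum_{u\le v}c_u\bigr)$ from Section~\ref{subsec:problem-baselines}, with $v=b_t+x$. Disjointness and ordering would then follow from the fact that these global intervals are consecutive. The lemma speaks of the interval that value $b_t+x$ ``occupies''. I would prove two things: these intervals tile $[0,n)$ in value order with the correct lengths, and Algorithm~\ref{alg:tiledcs-phase2} writes exactly $b_t+x$ at every position in this interval.

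\textbf{Step 1 (endpoints).} Fix a tile $t$ and $x\in[0,w_t)$. By Lemma~\ref{lem:local-cdf}, and using the convention $C_t[-1]=0$ for the case $x=0$, we have $C_t[x-1]=\sum_{j=0}^{x-1}c_{b_t+j}$. By Lemma~\ref{lem:tile-offset}, $O[t]=\sum_{u=0}^{b_t-1}c_u$. Adding these gives
\[
O[t]+C_t[x-1]=\sum_{u=0}^{b_t+x-1}c_u
\qquad\text{and}\qquad
O[t]+C_t[x]=\sum_{u=0}^{b_t+x}c_u .
\]
The length of the interval is therefore $C_t[x]-C_t[x-1]=c_{b_t+x}$.

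\textbf{Step 2 (disjointness and order).} Write $S(v)=\sum_{u<v}c_u$. The interval of value $v$ is $[S(v),S(v+1))$, and $S$ is nondecreasing. Hence for $v<v'$ the right endpoint of the interval of $v$, namely $S(v+1)$, is at most $S(v')$. So the intervals are disjoint and ordered by value, and their union is $[0,S(R))=[0,n)$. Every $v\in[0,R)$ is covered by exactly one pair $(t,x)$, since the tiles $[b_t,b_t+w_t)$ partition $[0,R)$.

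\textbf{Step 3 (the algorithm writes this value there).} Take a within-tile position $p\in[0,C_t[w_t-1])$. The binary search returns the smallest $x$ with $C_t[x]\ge p+1$. Because $C_t$ is nondecreasing, this means exactly $C_t[x-1]\le p<C_t[x]$. The global position $O[t]+p$ therefore lies in the interval of $b_t+x$, and $b_t+x$ is what gets written there.

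Conversely, every position in a nonempty interval of a value in tile $t$ arises from some $p$ in the loop range, because $C_t[w_t-1]$ is the tile's total count. Values with $c_v=0$ get empty intervals and are never selected, since the minimality of $x$ excludes any index where $C_t$ does not increase.

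I expect the main obstacle to be bookkeeping rather than any deep point. Specifically, the edge cases need care: $x=0$ relies on $C_t[-1]=0$, and empty values and empty tiles give empty intervals. It is also important to note that Lemma~\ref{lem:local-cdf} indexes $C_t$ within a tile, so the global cumulative sum must be reassembled through $O[t]$.
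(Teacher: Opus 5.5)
Your proposal is correct and follows the paper's proof: both add the within-tile counts $C_t[x-1]$ and $C_t[x]$ from Lemma~\ref{lem:local-cdf} to the offset $O[t]$ from Lemma~\ref{lem:tile-offset}, obtaining the global numbers of elements below $b_t+x$ and at most $b_t+x$, and both derive the length $c_{b_t+x}$ and the disjoint, value-ordered arrangement from the monotonicity of the cumulative counts. Your Step~3, showing that the binary search writes $b_t+x$ exactly on this interval, goes beyond this lemma; the paper gives that argument in the proof of Theorem~\ref{thm:rtcdf-correctness} instead.
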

\begin{proof}
By Lemma~\ref{lem:local-cdf}, the number of values within tile $t$ that are smaller than $b_t+x$ is $C_t[x-1]$, and the number no greater than $b_t+x$ is $C_t[x]$.
By Lemma~\ref{lem:tile-offset}, $O[t]$ is the number of elements whose values are smaller than the range handled by tile $t$.
Thus, in the complete output array, the number of values smaller than $b_t+x$ is $O[t]+C_t[x-1]$, and the number no greater than $b_t+x$ is $O[t]+C_t[x]$.
The claimed interval follows.
Its length is $C_t[x]-C_t[x-1]=c_{b_t+x}$, and the monotonicity of the CDF together with the construction of tile offsets implies that the intervals for distinct values occur consecutively in value order.
\end{proof}

\begin{theorem}[Correctness of the output array]
\label{thm:rtcdf-correctness}
After output generation, $B$ contains the elements of $A$ in nondecreasing order.
\end{theorem}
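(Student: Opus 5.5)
The plan is to show that every output position is written exactly once, and that the value written there is the one the sorted array must hold. Lemma~\ref{lem:value-interval} already identifies, for each value $b_t+x$, the interval $[O[t]+C_t[x-1],\,O[t]+C_t[x])$ it must occupy. It therefore suffices to show that Algorithm~\ref{alg:tiledcs-phase2} writes exactly $b_t+x$ at every position of that interval, and that these intervals together cover all of $B[0..n-1]$.

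\textbf{Step 1: the binary search lands in the right interval.} Fix a tile $t$ and a within-tile position $p\in[0,C_t[w_t-1])$. By Lemma~\ref{lem:local-cdf}, $C_t$ is nondecreasing, and $C_t[-1]=0\le p < p+1 \le C_t[w_t-1]$, so the set $\{j \mid C_t[j]\ge p+1\}$ is nonempty. Let $x$ be its minimum. Minimality gives $C_t[x-1]<p+1$; for $x=0$ this holds by the convention $C_t[-1]=0$. Together with $C_t[x]\ge p+1$ this yields
\[
C_t[x-1]\le p< C_t[x].
\]
Hence $O[t]+p$ lies in the interval assigned to $b_t+x$ by Lemma~\ref{lem:value-interval}, and the algorithm writes exactly $b_t+x$ there. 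Conversely, any $p$ in $[C_t[x-1],C_t[x])$ satisfies $C_t[x]\ge p+1$, and every $j<x$ satisfies $C_t[j]\le C_t[x-1]\le p$. So the minimum is $x$, and each position in the interval receives $b_t+x$.

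\textbf{Step 2: tiles cover $[0,n)$ without overlap.} By Lemma~\ref{lem:tile-offset}, tile $t$ writes the positions $[O[t],O[t]+N_t)$ with $N_t=C_t[w_t-1]$. Lemma~\ref{lem:tile-offset} also gives $O[t+1]=O[t]+N_t$. Since $O[0]=0$ and $O[m-1]+N_{m-1}=\sum_{u<R}c_u=n$, these ranges partition $[0,n)$. Every entry of $B$ is therefore written exactly once. The tiles are processed sequentially and write disjoint positions, so there are no write races.

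\textbf{Step 3: conclusion.} $B$ now consists of consecutive blocks. Value $v$ appears exactly $c_v$ times, which is the interval length from Lemma~\ref{lem:value-interval}. The blocks appear in increasing order of $v$, using the ordering in that same lemma. Thus $B$ is a permutation of the multiset of $A$, since the counts match, and it is nondecreasing.

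The main obstacle is Step~1. The argument must handle the boundary case $x=0$ and values with $c_v=0$, which have empty intervals and so must never be selected. Monotonicity of $C_t$ takes care of both: a zero-count value has $C_t[x]=C_t[x-1]$, so it can never be the least index with $C_t[x]\ge p+1$.
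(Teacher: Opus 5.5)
Your proof is correct and follows the paper's route. Both arguments show that the minimal index returned by the search is the unique $x$ with $p\in[C_t[x-1],C_t[x])$, then use Lemma~\ref{lem:value-interval} (disjoint, value-ordered intervals of lengths $c_v$) to conclude that every position is filled exactly once in nondecreasing order. You add detail the paper leaves implicit: nonemptiness of the search set, the $x=0$ case, and coverage of $[0,n)$ argued through the tile ranges $[O[t],O[t]+N_t)$ rather than through the total interval length $\sum_v c_v=n$.
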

\begin{proof}
For each within-tile position $p$, output generation finds the smallest $x$ satisfying $C_t[x]\ge p+1$.
This operation selects the unique $x$ such that $p\in[C_t[x-1],C_t[x])$.
Therefore, exactly $c_{b_t+x}$ output positions in the interval for value $b_t+x$ given by Lemma~\ref{lem:value-interval} are filled with $b_t+x$.
The intervals for all values are mutually disjoint, and their total length is $\sum_{v=0}^{R-1}c_v=n$, so the entire output array is filled exactly once.
Because the intervals are arranged in value order, the output array is in nondecreasing order.
The theorem follows.
\end{proof}

We next summarize the computational complexity.
GPU execution time depends strongly on memory bandwidth, atomic contention, occupancy, and the number of kernel launches, and therefore cannot be characterized adequately by a single parallel-time complexity as in a PRAM model.
We instead describe the total work and workspace of the algorithm, and evaluate actual performance experimentally in Section~\ref{sec:experimental-evaluation}.

Let $m=\lceil R/T\rceil$ be the number of tiles.
During histogram construction, every input element is read for every tile to test whether it belongs to that tile, yielding $O(mn)$ work.
The number of actual additions to shared-memory histograms over all tiles is nevertheless $O(n)$, because each input element belongs to exactly one tile.
Merging the shared-memory histogram of each thread block into global memory requires work proportional to the tile width and the number of blocks.
The total work of the within-tile prefix sums is $O(R)$ because each bin is processed a constant number of times over all tiles.
Constructing tile offsets takes $O(m)$ work.
Output generation performs a binary search in a local CDF for every output position and therefore makes $O(n\log T)$ comparisons.
The simplified total work is thus
\[
  O(mn+n\log T+R).
\]
In practice, this is supplemented by kernel-launch costs, global-memory merging proportional to the number of blocks, and atomic-operation costs at the different levels of the memory hierarchy.\footnote{Each tile also merges block-local histograms into global memory. If $G$ thread blocks are assigned to one range tile, this entails bin-merging work equivalent to $O(mGT)$. Both $T$ and $G$ are implementation parameters, and this cost is included in the measurements.}

Excluding the input and output arrays, the primary global-memory workspaces are $H[0..R-1]$, which stores local CDFs, and $O[0..m-1]$, which stores tile offsets.
The workspace is therefore $O(R+m)$.
Furthermore, with 32-bit integers, the histogram-construction and output-generation kernels use $4T$ bytes of dynamic shared memory.

This analysis identifies the effective range of the implementation.
When $R$ is small and $m$ is one or a small number, the number of input scans is low and handling local CDFs in shared memory provides a substantial advantage.
As $R$ grows, $m$ increases and histogram construction scans the input array repeatedly.
The $O(mn)$ term then becomes dominant, reducing the advantage over radix sort and RefHP.
Accordingly, RT-CDF is positioned as an improvement to counting-sort- and H-P-sort-based implementations for integer arrays with sufficiently small ranges.

\subsection{GPU Implementation}
\label{subsec:gpu-implementation}

We implemented RT-CDF in CUDA C++.
The operations in Algorithm~\ref{alg:tiledcs-overview} are executed by four kernel types: histogram construction for range tiles, prefix sums for constructing local CDFs, range-tile offset construction, and output generation from local CDFs.
We next describe the GPU implementation of each operation.

The histogram-construction kernel processes all range tiles in parallel in one kernel launch.
We fix the block size at 512 threads.
Given a dynamic shared-memory allocation of $4T$ bytes, let $a_{\mathrm{hist}}$ be the number of thread blocks that can reside concurrently on one SM, as determined from the kernel's resource usage.\footnote{The implementation uses \texttt{cudaOccupancyMaxActiveBlocksPerMultiprocessor} from the CUDA Runtime API.}
If the GPU has $S$ SMs, the number of thread blocks assigned to each range tile is
\[
g_{\mathrm{hist}}
=
\min\left\{
\max\left(1,\left\lfloor\frac{a_{\mathrm{hist}}\cdot S}{m}\right\rfloor\right),
\left\lceil\frac{n}{512}\right\rceil
\right\},
\]
and the grid size of the entire kernel is
\[
G_{\mathrm{hist}}=m\cdot g_{\mathrm{hist}}.
\]
Here, $a_{\mathrm{hist}}\cdot S$ is the number of thread blocks that can reside concurrently on the entire GPU after accounting for the kernel's resource use; these blocks are divided evenly among the $m$ range tiles.

Each thread block derives its assigned range tile $t$ and its block number $q$ within that tile from the block index as
\[
t=\texttt{blockIdx.x}\bmod m,\qquad
q=\left\lfloor\frac{\texttt{blockIdx.x}}{m}\right\rfloor.
\]
A thread assigned to range tile $t$ starts at input position
\[
i=512\cdot q+\texttt{threadIdx.x}
\]
and cooperatively scans the input array using a grid-stride loop that advances by $512\cdot g_{\mathrm{hist}}$ positions per iteration.
Only when the loaded value belongs to the target range tile $[b_t,b_t+w_t)$ does the thread perform \texttt{atomicAdd} on the corresponding bin in dynamic shared memory.
After scanning the input array, each thread block adds the partial histogram obtained in shared memory to $H[b_t..b_t+w_t-1]$ in global memory.
Thus, rather than performing a global-memory atomic addition for every input element, each thread block adds its aggregate count for each bin to global memory once.

After histogram construction, the local-CDF construction kernel computes the prefix sums of each range-tile histogram and writes the result back to the same region of $H$.
This kernel is launched once with $m$ thread blocks of 1024 threads each, with one block assigned to each range tile.
The block assigned to tile $t$ partitions $H[b_t..b_t+w_t-1]$ among its threads.
Each thread first computes partial sums over its contiguous subrange, then computes the sum of all preceding subranges by applying a block-level prefix sum to the per-thread totals, and adds that sum to every element of its own subrange.
This constructs the local CDFs for all range tiles in parallel in one kernel launch.

The range-tile offset kernel is launched with one thread block containing one thread.
It processes the range tiles in order and obtains the number of input elements in each tile from the final entry of its local CDF.
It sequentially accumulates the number of elements appearing before each tile and stores the output starting position of tile $t$ in $O[t]$.
Because the number of repeated input scans during histogram construction grows with the number of tiles $m$, the implementation is intended for ranges in which $m$ is relatively small.
Consequently, the $O(m)$ tile-offset construction has little effect on total execution time and is performed sequentially by one thread.

For output generation, range tiles are not processed concurrently; instead, a kernel is launched sequentially for each tile containing output elements.
Let $N_t$ be the number of input elements in range tile $t$.
The value $N_t$ is needed both to decide whether to launch the kernel and to determine the number of thread blocks in the launch.

When $m=1$, $N_0=n$ and this value is used directly.
When $m\ge2$, the range-tile offset array $O$ of size $m$ is copied once from the device to the host, and the host computes
\[
N_t=
\begin{cases}
O[t+1]-O[t] & (t<m-1),\\
n-O[t] & (t=m-1).
\end{cases}
\]
No output-generation kernel is launched for a tile with $N_t=0$.

The output-generation kernel uses a fixed block size of 512 threads.
As for histogram construction, let $a_{\mathrm{out}}$ be the number of blocks that can reside concurrently on one SM when the kernel uses $4T$ bytes of dynamic shared memory.
If the GPU has $S$ SMs, the number of thread blocks launched for a nonempty tile $t$ is
\[
g_{\mathrm{out}}(t)
=
\min\left\{
a_{\mathrm{out}}\cdot S,\,
\left\lceil\frac{N_t}{512}\right\rceil
\right\}.
\]
Unlike histogram construction, one output-generation launch processes only one range tile, so the number of blocks that can reside concurrently on the entire GPU is not divided by $m$.
This permits enough thread blocks to be assigned to the output of a tile even when input elements are concentrated in only a few tiles, reducing load imbalance caused by unequal tile populations.

Each thread block loads the local CDF of its target range tile from global memory into dynamic shared memory.
Each thread starts at within-tile output position
\[
i=512\cdot\texttt{blockIdx.x}+\texttt{threadIdx.x}
\]
and cooperatively handles the $N_t$ output positions in the tile using a grid-stride loop that advances by $512\cdot g_{\mathrm{out}}(t)$ positions per iteration.
For each output position, the thread performs a binary search in the local CDF in shared memory and writes the corresponding value to the output array $B$.
Output generation therefore does not reread the input array and distributes the output positions within a tile among the threads.

Table~\ref{tab:rtcdf-kernel-config} summarizes the launch configurations of these kernels.

\input{tables/table_rtcdf_kernel_config}
\FloatBarrier

The histogram-construction and output-generation kernels use $4T$ bytes of dynamic shared memory to store tile-local arrays.
If the dynamic shared-memory requirement for tile width $T$ exceeds the default shared-memory limit per thread block, the implementation uses CUDA's opt-in configuration to increase the permitted dynamic shared-memory allocation up to the GPU's supported limit.\footnote{The implementation sets this option using \texttt{cudaFuncSetAttribute} from the CUDA Runtime API.}
The tile widths used in the evaluation and their selection procedure are described in Section~\ref{subsec:implementation-selection}.

%% file: tables/table_rtcdf_kernel_config.tex
\begin{table}[H]
\centering
\caption{Kernel launch configurations in RT-CDF}
\label{tab:rtcdf-kernel-config}
\begin{tabular}{lccc}
    \toprule
    Operation
      & threads / block
      & blocks / grid \\
    \midrule
    Histogram construction
      & 512
      & $m \cdot g_{\mathrm{hist}}$ \\
    Local-CDF construction
      & 1024
      & $m$ \\
    Tile-offset construction
      & 1
      & 1 \\
    Output generation
      & 512
      & $g_{\mathrm{out}}(t)$ \\
    \bottomrule
\end{tabular}
\end{table}

%% file: sections/04_experimental_evaluation.tex
\section{Experimental Evaluation}
\label{sec:experimental-evaluation}

This section evaluates the RT-CDF implementation on an NVIDIA GeForce RTX 4090.
Section~\ref{subsec:implementation-selection} first examines the performance characteristics of the output-generation methods considered for RT-CDF and uses preliminary experiments to select the tile width for the main evaluation.
Section~\ref{subsec:main-evaluation-setup} then describes the baselines and measurement conditions for the main evaluation.
Section~\ref{subsec:overall-performance} compares RT-CDF with CUB, RefHP, and an implementation based on the algorithm of Kolonias et al., and Section~\ref{subsec:discussion-limitations} discusses the results.

\subsection{Preliminary Evaluation of Implementation Choices}
\label{subsec:implementation-selection}

This subsection evaluates the performance characteristics of the output-generation method used by RT-CDF and selects the tile width used in the main evaluation.
The preliminary experiments in this subsection and the main evaluation in Section~\ref{subsec:main-evaluation-setup} were conducted in the same environment, shown in Table~\ref{tab:environment}.
One KiB is 1024 bytes.

\input{tables/table_environment}
\FloatBarrier

Table~\ref{tab:preliminary-measurement} summarizes the conditions of the preliminary evaluation.

\input{tables/table_preliminary_measurement}
\FloatBarrier

We used uniform and normal input distributions.
For the uniform distribution, each element was generated uniformly at random from the integer interval $[0,R-1]$.
For the normal distribution, a real value was generated from a normal distribution with mean $(R-1)/2$ and standard deviation $0.125R$, and was rounded to the nearest integer.
A rounded value outside $[0,R-1]$ was clipped to 0 or $R-1$.

Because every entry of a tile-local histogram and local CDF is a 32-bit integer, a thread block requires at most $4T$ bytes of shared memory to process a tile.
For the largest candidate tile width, $T=24576$, this amount is $4T=96\ \mathrm{KiB}$, which is below the 99-KiB opt-in shared-memory limit per thread block shown in Table~\ref{tab:environment}.
For each range size $R$, we evaluated only candidates satisfying $T\le R$ and whose shared-memory requirement did not exceed the opt-in limit per thread block.
Moreover, because increasing the number of tiles $m$ causes the input array to be scanned for every tile, thereby increasing both the amount of work and the number of global-memory accesses, we restricted the candidates to tile widths satisfying $m=\lceil R/T\rceil\le16$ to avoid excessive rescanning.

To reduce the influence of temporary delays and execution-time variability, we performed multiple independent measurement runs for each condition.
In each run, several warm-up executions preceded the measurements; the median of the measurements was used as the representative value for that run, and the median of the representative values over all runs was used as the final representative value.
Because GPU clock frequencies and system conditions may differ between runs, this aggregation reduces the effect of a temporary fluctuation in any one run.

We use host-side wall-clock time to evaluate the total cost of a single sorting operation.
The measured interval begins with allocation of algorithm-specific workspace and ends after the GPU sorting operation has completed and the workspace has been deallocated.
Allocation of the input and output arrays is common to all methods and is therefore excluded.

We first compared the prefix-maximum method in Algorithm~\ref{alg:ref-hp} with the binary-search method over local CDFs used in Algorithm~\ref{alg:tiledcs-phase2}, in order to examine the performance characteristics of the output-generation method used by RT-CDF.
The two methods shared the same procedure for constructing a histogram from the input.
The binary-search method then constructed a local CDF for every tile and computed tile offsets, whereas the prefix-maximum method constructed a CDF for the complete range; each method subsequently performed its own output-generation procedure.
Thus, this experiment compares the total sorting time, including the CDF construction required by each method, rather than only the output-generation kernels.

Because the performance of both methods varies with tile width, we measured every feasible tile-width candidate for every combination of range size, input size, and input distribution.
For each method, the minimum time over the tile-width candidates was taken as the execution time for that condition.
This compares the best performance obtainable by each method without fixing a tile width that might favor or disadvantage one of them.
Figure~\ref{fig:output-method-heatmap} shows the results.
We define
\[
\rho =
\frac{\text{minimum execution time of the prefix-maximum method}}
     {\text{minimum execution time of the binary-search method}}.
\]
The binary-search method is faster when $\rho>1$, whereas the prefix-maximum method is faster when $\rho<1$.
Each cell in the figure shows $\rho$ for a particular range size $R$, input size $n$, and input distribution.
The displayed values are rounded to two decimal places.

\input{figures/fig_output_method_heatmap}
\FloatBarrier

As shown in Figure~\ref{fig:output-method-heatmap}, the binary-search method was faster in 184 of the 240 conditions.
Its advantage became more pronounced as the input size increased: for all 120 conditions with $n\ge5\times10^7$, the binary-search method was faster regardless of the range size or input distribution.
For small inputs and large ranges, by contrast, the prefix-maximum method was faster.
This appears to be because a large range increases the number of tiles, making fixed overheads of the binary-search method---including per-tile kernel launches and retrieval of tile populations before output generation---relatively large for small inputs.
Our evaluation includes inputs of up to $10^9$ elements and uses one common output-generation algorithm for all conditions.
The binary-search method was consistently faster for large inputs, indicating that this design is suitable for RT-CDF evaluations that include large-scale inputs.
Nevertheless, switching the output-generation method according to the input size and range could further improve RT-CDF under conditions in which binary search was slower.
Such adaptive selection is a possible future improvement.

We next select the tile width $T$ for the binary-search method used in the main evaluation.
Using the binary-search measurements from the preceding comparison of output-generation methods, we compared tile widths separately for each range size $R$.
For each $R$, the comparison covered 20 conditions formed by 10 input sizes $n$ and two input distributions.
For each condition, we define the normalized execution time of tile width $T$ as
\[
r(T)=
\frac{\text{execution time with $T$}}{\text{minimum execution time for that condition}}.
\]
Thus, $r(T)\ge1$, and $r(T)=1$ means that $T$ is the fastest tile width under that condition.
For each tile width, we computed the geometric mean of $r(T)$ over the 20 conditions and compared tile widths using this value.
A value closer to 1 means that the tile width performs closer to the fastest choice for each condition across the 20 combinations of input size and input distribution.
Table~\ref{tab:tile-width-relative-time} presents the results.

\input{tables/table_tile_width_relative}
\FloatBarrier

Based on Table~\ref{tab:tile-width-relative-time}, for each range size $R$ we selected the tile width $T$ with the smallest geometric mean of normalized execution time.
The selected tile widths and corresponding numbers of tiles are shown in Table~\ref{tab:selected-tile-width}.

\input{tables/table_selected_tile_width}
\FloatBarrier

The main evaluation below fixes the tile width for each range size to the value in Table~\ref{tab:selected-tile-width}.

\subsection{Baselines and Measurement Conditions for the Main Evaluation}
\label{subsec:main-evaluation-setup}

This subsection describes the conditions of the main evaluation using the tile widths selected in Section~\ref{subsec:implementation-selection}.
The experimental environment was the same as in the preliminary evaluation.
We compared RT-CDF with CUB, RefHP, and an implementation based on the algorithm of Kolonias et al.
Table~\ref{tab:comparison-targets} summarizes the methods.
For CUB, we used the known range size $R$ to restrict the bit range processed by \texttt{DeviceRadixSort} to $[0,\lceil\log_2R\rceil)$.
Because this setting reduces the number of bits processed relative to sorting all 32 bits, it favors CUB and reflects the known-range problem setting of this study more appropriately than always processing a complete 32-bit integer.
RefHP is a prior implementation that refines H-P sort for GPUs and is the closest baseline to RT-CDF.
We also included an implementation based on Kolonias et al. because its output generation, which produces one interval per value, is closely related to RT-CDF.

\input{tables/table_comparison_targets}

Table~\ref{tab:measurement-method} summarizes the conditions of the main evaluation.

\input{tables/table_measurement_method}

The uniform and normal inputs were generated in the same manner as in the preliminary evaluation.
For all-equal inputs, every element was set to zero.
The range supplied to the algorithms remained $R$, as for the other distributions; it was not changed to $R=1$ even though only one distinct value actually occurred.
This evaluates performance when all values are concentrated at one point while the declared range remains fixed.

Because the output-generation method of Kolonias et al. assigns work by value, concentrating occurrences on one value creates a workload imbalance~\cite{kolonias2011countsort}.
The all-equal input used here is therefore an unfavorable condition for that output-generation method.

The aggregation of measurements and the measured interval were the same as in Section~\ref{subsec:implementation-selection}.
If an execution time exceeded 1000\,ms, measurements for larger input sizes with the same range $R$ and input distribution were terminated.

To investigate the causes of performance changes in RT-CDF, Section~\ref{subsec:discussion-limitations} uses an additional timing experiment on RT-CDF alone.
This experiment fixed the input size at $n=10^8$ and used the same 12 range sizes, three input distributions, and tile widths from Table~\ref{tab:selected-tile-width} as the main evaluation.
The number of measurements and their aggregation also followed the main evaluation: three independent runs were performed, each containing 30 measurements after 10 warm-up executions; the median within each run was used as its representative value, and the median of the three run representatives was used as the final value.
CUDA events recording positions in the GPU execution stream were placed before and after each processing stage, and elapsed times between event pairs were measured.
The measured stages were histogram-array initialization, histogram construction, local-CDF construction, tile-offset construction, and output generation.\footnote{The implementation uses \texttt{cudaEventRecord} and \texttt{cudaEventElapsedTime} from the CUDA Runtime API.}

\subsection{Performance Evaluation}
\label{subsec:overall-performance}

We first fix the input size at $n=10^8$.
Tables~\ref{tab:n1e8-times-uniform}, \ref{tab:n1e8-times-gaussian}, and \ref{tab:n1e8-times-all-equal} show execution times and speedup ratios for the uniform, normal, and all-equal inputs, respectively.
Here, ``Best baseline'' denotes the fastest execution time among CUB, RefHP, and the implementation based on Kolonias et al.
The speedup ratio is the execution time of the fastest baseline divided by the execution time of RT-CDF.
Consequently, a ratio greater than 1 means that RT-CDF is faster than every baseline.

\input{tables/table_n1e8_times_uniform}
\FloatBarrier

\input{tables/table_n1e8_times_gaussian}
\FloatBarrier

\input{tables/table_n1e8_times_all-equal}
\FloatBarrier

For all three input distributions, RT-CDF was the fastest method from $R=2^7$ through $2^{17}$, whereas CUB was fastest at $R=2^{18}$.
For $R\le2^{14}$, the RT-CDF execution times differed little across the input distributions, and all three distributions exhibited the same tendency of increasing execution time with the range size.
Even at $R=2^{17}$, RT-CDF was 1.15, 1.19, and 1.22 times faster than the fastest baseline for the uniform, normal, and all-equal inputs, respectively.
At $R=2^{18}$, its speedup ratios relative to the fastest baseline fell to 0.62, 0.62, and 0.64, respectively, and the advantage of RT-CDF disappeared independently of the input distribution.
The largest ratios in these tables occurred at $R=2^{10}$ for the uniform and all-equal inputs and at $R=2^9$ for the normal input, and were approximately 2.65--2.67.
The implementation based on Kolonias et al. required approximately 267--273\,ms for every range under the all-equal input, much longer than for the other two distributions.
This is presumably because its threads are assigned by value and almost all output generation is concentrated in the single thread assigned to value zero.
CUB was the fastest baseline in every one of these 36 conditions, and RT-CDF outperformed CUB and all other baselines in 33 of the 36.
The only conditions in which RT-CDF was not the fastest were the three input distributions at $R=2^{18}$.
These results show that, for $n=10^8$, RT-CDF performance depends primarily on the range size and only weakly on the input distribution.

We next fix the range and compare execution time as a function of the input size $n$.
As a representative range for which RT-CDF performs well over a broad interval of input sizes, we first consider $R=2^{10}$.
Figures~\ref{fig:runtime-r1024-uniform}, \ref{fig:runtime-r1024-gaussian}, and \ref{fig:runtime-r1024-all-equal} show the uniform, normal, and all-equal results, respectively.

\input{figures/fig_runtime_r1024_uniform}
\FloatBarrier

\input{figures/fig_runtime_r1024_gaussian}
\FloatBarrier

\input{figures/fig_runtime_r1024_all_equal}
\FloatBarrier

For $R=2^{10}$, RT-CDF was fastest at all 28 measured input sizes for each of the three input distributions.
Thus, it outperformed every baseline in all 84 conditions across the three distributions.
The RT-CDF curves for the three input distributions were also nearly identical over the entire measurement range, indicating little distribution-dependent performance variation at $R=2^{10}$.
For all-equal inputs, the execution time of the implementation based on Kolonias et al. increased sharply with the input size, and results beyond the termination threshold are omitted from the figures.

We next consider $R=2^{17}$, near the boundary at which the relative performance of RT-CDF and the baselines changes with the input size.
Figures~\ref{fig:runtime-r131072-uniform}, \ref{fig:runtime-r131072-gaussian}, and \ref{fig:runtime-r131072-all-equal} show the uniform, normal, and all-equal results, respectively.

\input{figures/fig_runtime_r131072_uniform}
\FloatBarrier

\input{figures/fig_runtime_r131072_gaussian}
\FloatBarrier

\input{figures/fig_runtime_r131072_all_equal}
\FloatBarrier

At $R=2^{17}$, unlike at $R=2^{10}$, a baseline was faster for small inputs, whereas RT-CDF became the fastest method as the input size increased.
For the uniform distribution, RefHP was fastest for $n\le10^7$, CUB was fastest at $n=2\times10^7$, and RT-CDF was fastest for $n\ge3\times10^7$.
For the normal distribution, RefHP was fastest at $n=10^6$, CUB for $2\times10^6\le n\le10^7$, and RT-CDF for $n\ge2\times10^7$.
For all-equal inputs, CUB was fastest for $n\le10^7$, whereas RT-CDF was fastest for $n\ge2\times10^7$.
Across the three distributions, RT-CDF was fastest in 53 of the 84 conditions.
Thus, $R=2^{17}$ is a boundary regime in which fixed overheads favor the baselines for small inputs, but the advantage of RT-CDF emerges once the input is sufficiently large.

Finally, we examine the largest evaluated range, $R=2^{18}$.
Figures~\ref{fig:runtime-r262144-uniform}, \ref{fig:runtime-r262144-gaussian}, and \ref{fig:runtime-r262144-all-equal} show execution time as a function of $n$ for the uniform, normal, and all-equal inputs, respectively.

\input{figures/fig_runtime_r262144_uniform}
\FloatBarrier

\input{figures/fig_runtime_r262144_gaussian}
\FloatBarrier

\input{figures/fig_runtime_r262144_all_equal}
\FloatBarrier

For $R=2^{18}$, RT-CDF was not the fastest method at any of the 28 measured input sizes for any distribution.
For uniform and normal inputs, the fastest method alternated between CUB and RefHP depending on the input size, but both outperformed RT-CDF throughout the measurement range.
For all-equal inputs, RT-CDF and CUB were nearly tied at $n=10^6$, but CUB was faster thereafter.
Moreover, increasing the input size did not produce a trend in which RT-CDF caught up with and overtook the baselines; the large-input advantage observed at $R=2^{17}$ was lost at $R=2^{18}$.

In summary, RT-CDF was superior independently of the input size and distribution at $R=2^{10}$; its relative performance changed with the input size at $R=2^{17}$; and the baselines were superior throughout the measurement range at $R=2^{18}$.
The performance boundary of RT-CDF therefore appears clearly between $R=2^{17}$ and $R=2^{18}$.

We next present the speedup ratio defined as the execution time of the fastest baseline divided by that of RT-CDF for each condition.
A ratio greater than 1 means that RT-CDF is fastest; a ratio below 1 means that at least one baseline is faster.
In the version of CUB \texttt{DeviceRadixSort} used in this experimental environment, 32-bit keys are processed using Onesweep, with up to eight bits processed per pass.
We refer to each such processing step as a pass.
Because CUB was instructed to sort only the low-order $\log_2R$ bits required by the range, it requires one pass for $R\le2^8$, two passes for $2^9\le R\le2^{16}$, and three passes for $2^{17}\le R\le2^{18}$.
To account for these differences, we divide the speedup results into these three ranges.
For the two-pass range, we show representative results for $R=2^9,2^{12},2^{15},2^{16}$.
Figures~\ref{fig:speedup-uniform}, \ref{fig:speedup-gaussian}, and \ref{fig:speedup-all-equal} show the uniform, normal, and all-equal results, respectively.
Because the numerator is the fastest baseline for each condition, changes in the curves reflect not only changes in the performance gap relative to RT-CDF but also changes in which baseline is fastest.

\input{figures/fig_speedup_uniform}
\FloatBarrier

For uniform inputs, the largest advantage over the baselines occurred at $R=2^9$, where the maximum speedup was 3.70 at $n=9\times10^6$.
For $R=2^9$ and $2^{12}$, the ratio exceeded 1 at every measured input size.
As the range increased, however, the ratio for small inputs decreased: it fell below 1 at $n=10^6,2\times10^6$ for $R=2^{15}$ and at $n\le5\times10^6$ for $R=2^{16}$.
This tendency became more pronounced at $R=2^{17}$, where a baseline was faster for $n\le2\times10^7$, whereas RT-CDF was fastest for $n\ge3\times10^7$.
At $R=2^{18}$, the ratio was below 1 for every measured input size, with no indication that RT-CDF would overtake the baselines as the input size increased.

\input{figures/fig_speedup_gaussian}
\FloatBarrier

For normal inputs, the largest advantage likewise occurred at $R=2^9$, with a maximum speedup of 3.60 at $n=9\times10^6$.
RT-CDF was fastest throughout the measured range for $R=2^9$ and $2^{12}$.
The ratio fell below 1 at $n=10^6$ for $R=2^{15}$ and at $n\le5\times10^6$ for $R=2^{16}$, but RT-CDF was fastest for larger inputs.
At $R=2^{17}$, a baseline was faster for $n\le10^7$, whereas RT-CDF was fastest for $n\ge2\times10^7$.
As for the uniform distribution, the ratio was below 1 for every input size at $R=2^{18}$.

\input{figures/fig_speedup_all_equal}
\FloatBarrier

For all-equal inputs, the maximum speedup observed in the entire experiment, 4.39, occurred at $R=2^9$ and $n=10^6$.
For each representative two-pass range shown in the figure---$R=2^9,2^{12},2^{15},2^{16}$---the ratio exceeded 1 at every measured input size.
This differs from the uniform and normal results for large ranges, in which the ratio fell below 1 for small inputs.
At $R=2^{17}$, a baseline was faster for $n\le10^7$, whereas RT-CDF was fastest for $n\ge2\times10^7$.
At $R=2^{18}$, RT-CDF and CUB were nearly tied at $n=10^6$, but the ratio was at most 1 for every measured input size.

For all three input distributions, the largest advantage of RT-CDF over the baselines was obtained at $R=2^9$.
The ratio generally decreased as the range grew, and a baseline was particularly likely to be superior for small inputs with a large range.
At $R=2^{17}$, all three distributions exhibited a boundary behavior in which the ratio crossed 1 as the input size increased; at $R=2^{18}$, RT-CDF was never fastest over the measured range.
The speedup ratios therefore also confirm that the performance boundary lies between $R=2^{17}$ and $R=2^{18}$.

We next identify the fastest method at representative input sizes.
Tables~\ref{tab:winner-map-uniform}, \ref{tab:winner-map-gaussian}, and \ref{tab:winner-map-all-equal} show the method with the smallest execution time for each range at $n=10^6,10^7,10^8,10^9$.

\input{tables/table_winner_map_uniform}
\FloatBarrier

\input{tables/table_winner_map_gaussian}
\FloatBarrier

\input{tables/table_winner_map_all_equal}
\FloatBarrier

For all three distributions, RT-CDF was fastest over a broad interval of input sizes when the range was small.
As the range grew, RefHP or CUB was increasingly likely to be fastest, particularly for small inputs.
At $R=2^{17}$, a baseline was fastest at $n=10^6$ and $10^7$ for all three distributions, whereas RT-CDF was fastest at $n=10^8$ and $10^9$.
This confirms that the relative performance of RT-CDF and the baselines changes with the input size at $R=2^{17}$.
At $R=2^{18}$, RT-CDF was not fastest for any of the four representative input sizes.
For uniform and normal inputs, RefHP or CUB was fastest depending on the input size, and CUB was fastest in all four all-equal conditions.
Thus, these winner maps show the same pattern as the execution-time and speedup results: RT-CDF retains an advantage for large inputs at $R=2^{17}$ but loses it at $R=2^{18}$.
Among all 144 combinations of representative input size and range, RT-CDF was fastest in 119, RefHP in 13, and CUB in 12; the implementation based on Kolonias et al. was never fastest.

Finally, we compare the absolute execution times of all methods at representative input sizes.
For $n=10^6,10^7,10^8,10^9$, Tables~\ref{tab:runtime-four-n-uniform}, \ref{tab:runtime-four-n-gaussian}, and \ref{tab:runtime-four-n-all-equal} show the execution times of RT-CDF, CUB, RefHP, and the implementation based on Kolonias et al. over every evaluated range for uniform, normal, and all-equal inputs, respectively.
The smallest execution time in each condition is shown in bold.
These tables make it possible to examine not only which method is fastest, but also the absolute magnitude of the performance differences.

\input{tables/table_runtime_four_n_uniform}
\FloatBarrier

\input{tables/table_runtime_four_n_gaussian}
\FloatBarrier

\input{tables/table_runtime_four_n_all-equal}
\FloatBarrier

As Tables~\ref{tab:runtime-four-n-uniform}--\ref{tab:runtime-four-n-all-equal} show, when RT-CDF outperforms the baselines on large inputs, the absolute reduction in execution time is also substantial.
For example, for a uniform input with $R=2^{15}$ and $n=10^9$, RT-CDF runs in 11.88\,ms, reducing the execution time by approximately 13.12\,ms relative to the fastest baseline, CUB, at 25.00\,ms.
For the same range and input size under the normal distribution, RT-CDF runs in 12.14\,ms and CUB in 24.87\,ms, a difference of approximately 12.73\,ms.

For small inputs, by contrast, even a large speedup ratio corresponds to a small absolute difference.
For an all-equal input with $R=2^9$ and $n=10^6$, the speedup is approximately 4.39, but RT-CDF runs in 0.02140\,ms and the fastest baseline, CUB, in 0.09403\,ms, a difference of approximately 0.073\,ms.
Both the ratio and the absolute execution time should therefore be considered.

As the range approaches the performance boundary, the absolute advantage on large inputs also narrows.
For a uniform input with $R=2^{17}$ and $n=10^9$, RT-CDF runs in 28.55\,ms and the fastest baseline, RefHP, in 29.75\,ms, an advantage of approximately 1.20\,ms.
At $R=2^{18}$ and $n=10^9$, RT-CDF takes 53.14\,ms, whereas RefHP takes 29.75\,ms, making RT-CDF approximately 23.39\,ms slower.
This again demonstrates that the performance boundary occurs between $R=2^{17}$ and $R=2^{18}$.

\subsection{Execution-Time Breakdown and Performance Limitations}
\label{subsec:discussion-limitations}

To investigate why RT-CDF performance changes with the range and input distribution, we used the additional measurements described in Section~\ref{subsec:main-evaluation-setup} to break down the time spent in each processing stage.
Figure~\ref{fig:rtcdf-kernel-breakdown} shows the results with the input size fixed at $n=10^8$.
The figure presents the execution times of two processing stages: histogram construction and output generation.
We also measured local-CDF construction, tile-offset construction, and histogram-array initialization, but each required no more than 0.037\,ms under every condition and was sufficiently small relative to histogram construction and output generation to be omitted from the figure.
The breakdown reported here is based on elapsed times between CUDA events, as described in Section~\ref{subsec:main-evaluation-setup}, and is not a complete decomposition of the host-side wall-clock time used in the main evaluation.
The main-evaluation time includes allocation and deallocation of method-specific workspace, whereas the times in this figure do not.

\input{figures/fig_rtcdf_kernel_breakdown}
\FloatBarrier

Under the settings in Table~\ref{tab:selected-tile-width}, the number of range tiles is $m=1$ for $R\le2^{12}$.
In this interval, histogram-construction time and output-generation time are similar, and neither stage differs substantially across the three input distributions.

For a fixed range, histogram-construction times differ little among the three distributions, indicating that the effect of skewed input values is limited.
They increase substantially with the range, however: approximately 0.67\,ms at $R=2^{15}$, 1.15--1.18\,ms at $R=2^{16}$, 2.25--2.29\,ms at $R=2^{17}$, and 4.54--4.62\,ms at $R=2^{18}$.

During RT-CDF histogram construction, every input element is tested for membership in each range tile.
With $m$ tiles, the entire input array is therefore examined $m$ times, for $O(mn)$ input references.
Repeated scanning as the number of tiles increases is the principal cause of the higher histogram-construction time for large ranges.
Indeed, from $R=2^{14}$ through $2^{17}$, $m$ increases to 2, 4, 8, and 11, respectively, and the histogram-construction time rises accordingly.

From $R=2^{17}$ to $2^{18}$, histogram-construction time nearly doubles even though $m=11$ in both cases.
As shown in Table~\ref{tab:selected-tile-width}, the tile width is $T=12288$ for $R=2^{17}$ and $T=24576$ for $R=2^{18}$.
Because each thread block retains a histogram of width $T$ in shared memory, the shared-memory use per block with 32-bit integers is approximately 48\,KiB and 96\,KiB, respectively.
Consequently, two thread blocks can reside concurrently on one SM at $R=2^{17}$, but only one at $R=2^{18}$.
The resulting decrease in occupancy due to fewer concurrently resident blocks is one reason that histogram construction becomes slower at $R=2^{18}$ despite the unchanged value of $m$.

Output-generation time changes less with the range than histogram-construction time and also differs little among the input distributions.
For $R\le2^{16}$, it is approximately 0.47--0.53\,ms over all three distributions.
At $R=2^{17}$, it is 0.585\,ms for uniform inputs, 0.581\,ms for normal inputs, and 0.487\,ms for all-equal inputs.
Our implementation launches output-generation kernels sequentially by range tile and chooses the number of thread blocks according to the number of elements in each tile.
As a result, output-generation time does not increase substantially either when elements are distributed across several tiles, as under the uniform distribution, or when they are concentrated in only a few tiles, as under the normal and all-equal inputs.

For an all-equal input, in particular, only the range tile containing value zero is nonempty, and no output-generation kernel is launched for tiles without output elements.
The output-generation time at $R=2^{17}$ is consequently shorter than for the uniform and normal inputs.
This result also shows that the output-generation stage does not suffer a large performance degradation from distribution-induced imbalance in tile populations.

At $R=2^{18}$, output-generation time also increases, to 0.835\,ms for uniform inputs, 0.817\,ms for normal inputs, and 0.668\,ms for all-equal inputs.
The tile width is $T=24576$ at this range, so more shared memory is needed to retain the local CDF during output generation.
Furthermore, for uniform and normal inputs, output-generation kernels must be launched sequentially for several nonempty tiles.
This increase is nevertheless small relative to that of histogram construction.

At $R=2^{18}$, histogram construction accounts for most of the measured time.
For the uniform distribution, histogram construction requires approximately 4.62\,ms and output generation approximately 0.83\,ms; the corresponding values are approximately 4.57 and 0.82\,ms for the normal distribution, and 4.54 and 0.67\,ms for the all-equal input.
Thus, among the principal processing stages measured here, histogram construction dominates at $R=2^{18}$ regardless of the input distribution.

These results are consistent with the main-evaluation trends in Section~\ref{subsec:overall-performance}.
At $R=2^{17}$, the baselines are faster for small inputs, but RT-CDF becomes fastest for sufficiently large inputs under all three distributions.
At $R=2^{18}$, RT-CDF is never fastest over the measured input sizes.
In Figure~\ref{fig:rtcdf-kernel-breakdown}, the increase in histogram-construction time from $R=2^{17}$ to $2^{18}$ is greater than the increase in output-generation time, and this pattern occurs for all three distributions.
The performance loss at $R=2^{18}$ therefore appears to be caused primarily by the cost of constructing histograms for a large range, rather than by a problem specific to one input distribution.

RT-CDF can therefore execute histogram construction and output generation efficiently when the range is relatively small and can be processed with a small number of range tiles.
As the range grows, however, the input array is rescanned more often during histogram construction because the number of tiles $m$ increases.
Moreover, a large tile width increases the shared-memory consumption per thread block, reducing the number of concurrently resident thread blocks and lowering occupancy.
At $R=2^{18}$, these effects make histogram construction dominant and eliminate the advantage of RT-CDF over the baselines.
Extending the applicability of RT-CDF to larger ranges will therefore require reducing histogram-construction cost rather than primarily optimizing output generation.

The implementation based on Kolonias et al. also generates the sorted array from one output interval per value.
Because it assigns output-generation work by value, that work can be distributed among many values under uniform and normal inputs; for an all-equal input, however, almost every output element is generated by the single processing unit corresponding to value zero.
In the main evaluation, its host-side wall-clock time for all-equal inputs of $n=10^8$ was approximately 267--273\,ms throughout the evaluated ranges, and increasing the range did not improve output-generation parallelism.
RT-CDF, in contrast, generates output by range tile and assigns multiple thread blocks according to the number of elements in the tile, allowing output generation to be parallelized even for all-equal inputs.

The primary global-memory workspaces used by RT-CDF, excluding the input and output arrays, are the array $H$ of length $R$ and the tile-offset array $O$ of length $m$.
With 32-bit integers, their total size is $4R+4m$ bytes.
By contrast, the ordinary \texttt{SortKeys} interface of CUB \texttt{DeviceRadixSort} used in this evaluation requires $O(n+P)$ temporary storage for an input of length $n$ and $P$ SMs~\cite{nvidia_cub_radixsort}.
RT-CDF can therefore be advantageous in workspace size when $R$ is sufficiently small relative to $n$.

This advantage concerns the global-memory capacity required for workspace and does not imply that the number of global-memory accesses is always small.
As discussed above, RT-CDF histogram construction repeatedly examines the input array according to the number of range tiles $m$, increasing input-array traffic for large ranges.
RT-CDF can thus process small ranges quickly with limited workspace, whereas repeated input scans become a performance limitation as the range grows.

The tile width for the main evaluation was fixed for each range on the basis of the preliminary evaluation with uniform and normal inputs and was not separately tuned for all-equal inputs.
The all-equal results therefore represent RT-CDF performance under a distribution-independent configuration rather than under a tile width specialized for that input.
Neither Figure~\ref{fig:rtcdf-kernel-breakdown} nor the main-evaluation results show a substantial degradation for all-equal inputs under this configuration, indicating that the implementation is not strongly dependent on the input distribution.

The evaluation used a single NVIDIA GeForce RTX 4090 GPU.
On GPUs with different numbers of SMs, shared-memory capacities, memory bandwidths, and atomic-operation performance, the optimal tile width, the time ratios among processing stages, and the relative performance against the baselines may change.
In particular, because tile width directly affects occupancy through the shared-memory use per thread block, the range size at which performance begins to degrade is likely to differ on GPUs with different shared-memory limits.
Evaluation on other GPU architectures remains future work.

%% file: tables/table_environment.tex
\begin{table}[H]
\centering
\small
\caption{Experimental environment}
\label{tab:environment}
\begin{tabular}{ll}
\toprule
Item & Setting \\
\midrule
GPU & NVIDIA GeForce RTX 4090 \\
Number of SMs & 128 \\
Compute Capability & 8.9 \\
CUDA Toolkit & 13.0.1 \\
Default shared-memory limit per thread block & 48 KiB \\
Opt-in shared-memory limit per thread block & 99 KiB \\
Shared-memory limit per SM & 100 KiB \\
\bottomrule
\end{tabular}
\end{table}

%% file: tables/table_preliminary_measurement.tex
\begin{table}[H]
\centering
\small
\caption{Preliminary-evaluation conditions for selecting the implementation}
\label{tab:preliminary-measurement}
\begin{tabular}{lp{0.67\linewidth}}
\toprule
Item & Setting \\
\midrule
Input
  & 32-bit integer array satisfying $0\le A_i<R$ \\

Range size
  & $R=2^7,2^8,\ldots,2^{18}$ \\

Input size
  & $10^6,\ 2\times10^6,\ 5\times10^6,\ 10^7,\
     2\times10^7,\ 5\times10^7,\ 10^8,\
     2\times10^8,\ 5\times10^8,\ 10^9$ \\

Input distribution
  & Uniform distribution, Normal distribution
    (mean $(R-1)/2$ and standard deviation $0.125R$) \\

Candidate tile widths
  & $128,\ 256,\ 512,\ 1024,\ 2048,\ 4096,\ 6144,\
     8192,\ 12288,\ 16384,\ 20480,\ 24576$ \\

Candidate constraints
  & $T\le R$, $\lceil R/T\rceil\le16$, 
    and shared-memory use not exceeding the per-block limit \\

Measurements
  & Three independent runs per condition; 10 measurements after warm-up in each run \\

Representative value
  & Median of 10 measurements per run, followed by the median of the three run medians \\

Measured time
  & Host-side wall-clock time including allocation and deallocation of method-specific workspace \\
\bottomrule
\end{tabular}
\end{table}

%% file: figures/fig_output_method_heatmap.tex
\begin{figure}[H]
\centering
\pgfplotsset{
  colormap={outputmethodratio}{
    rgb255(0cm)=(49,54,149);
    rgb255(0.5cm)=(116,173,209);
    rgb255(0.9cm)=(247,247,247);
    rgb255(1.3cm)=(244,109,67);
    rgb255(1.8cm)=(165,0,38)
  },
  outputmethodlabels/.style={
    only marks,
    mark=none,
    nodes near coords={\pgfmathprintnumber[fixed,precision=2,fixed zerofill]{\pgfplotspointmeta}},
    nodes near coords align={center},
    every node near coord/.append style={
      anchor=center,
      inner sep=0pt,
      font=\fontsize{5.2}{5.2}\selectfont
    }
  },
  outputmethodaxis/.style={
    width=\linewidth,
    height=0.94\linewidth,
    axis equal image,
    enlargelimits=false,
    xmin=-0.5, xmax=9.5,
    ymin=-0.5, ymax=11.5,
    xtick={0,1,2,3,4,5,6,7,8,9},
    xticklabels={
      {$10^6$},
      {$2\times10^6$},
      {$5\times10^6$},
      {$10^7$},
      {$2\times10^7$},
      {$5\times10^7$},
      {$10^8$},
      {$2\times10^8$},
      {$5\times10^8$},
      {$10^9$}
    },
    ytick={0,1,2,3,4,5,6,7,8,9,10,11},
    yticklabels={$2^7$,$2^8$,$2^9$,$2^{10}$,$2^{11}$,$2^{12}$,$2^{13}$,$2^{14}$,$2^{15}$,$2^{16}$,$2^{17}$,$2^{18}$},
    xlabel={Input size $n$},
    ylabel={Range size $R$},
    xticklabel style={rotate=45,anchor=east,font=\scriptsize},
    yticklabel style={font=\scriptsize},
    label style={font=\small},
    colormap name=outputmethodratio,
    point meta min=-1.35,
    point meta max=1.35,
    xmajorgrids=false, ymajorgrids=false,
    xminorgrids=true, yminorgrids=true,
    minor x tick num=1, minor y tick num=1,
    minor tick length=0pt,
    minor grid style={draw=black!32,line width=0.2pt},
    colorbar,
    colorbar style={
      ylabel={$\rho$},
      ytick={-1.321928,-0.736966,-0.321928,0,0.321928,0.584963,0.847997},
      yticklabels={$0.4$,$0.6$,$0.8$,$1.0$,$1.25$,$1.5$,$1.8$},
      tick label style={font=\scriptsize},
      label style={font=\small}
    }
  }
}

\begin{subfigure}[t]{0.64\textwidth}
\centering
\begin{tikzpicture}
\begin{axis}[outputmethodaxis]
\addplot[matrix plot*,mesh/cols=10,point meta=explicit]
table[x=x,y=y,meta=log2ratio,col sep=comma,row sep=\\] {
x,y,n,R,ratio,log2ratio\\
0,0,1000000,128,1.330939772,0.412445288\\
1,0,2000000,128,1.325780956,0.406842435\\
2,0,5000000,128,1.339247599,0.421422709\\
3,0,10000000,128,1.190224790,0.251234071\\
4,0,20000000,128,1.460472085,0.546434783\\
5,0,50000000,128,1.784268845,0.835333010\\
6,0,100000000,128,1.801457027,0.849164238\\
7,0,200000000,128,1.788292701,0.838582890\\
8,0,500000000,128,1.793112175,0.842465745\\
9,0,1000000000,128,1.770928464,0.824505936\\
0,1,1000000,256,1.314162441,0.394143615\\
1,1,2000000,256,1.346945478,0.429691455\\
2,1,5000000,256,1.274243423,0.349640906\\
3,1,10000000,256,1.150535058,0.202304945\\
4,1,20000000,256,1.435080433,0.521131599\\
5,1,50000000,256,1.762958846,0.817998797\\
6,1,100000000,256,1.784844892,0.835798705\\
7,1,200000000,256,1.809775596,0.855810820\\
8,1,500000000,256,1.784782409,0.835748199\\
9,1,1000000000,256,1.767135380,0.821412569\\
0,2,1000000,512,1.166406737,0.222070958\\
1,2,2000000,512,1.229840976,0.298471781\\
2,2,5000000,512,1.214646057,0.280535980\\
3,2,10000000,512,1.115138406,0.157222783\\
4,2,20000000,512,1.393380665,0.478589450\\
5,2,50000000,512,1.743375144,0.801883046\\
6,2,100000000,512,1.768332629,0.822389676\\
7,2,200000000,512,1.800085492,0.848065427\\
8,2,500000000,512,1.778943164,0.831020418\\
9,2,1000000000,512,1.750095896,0.807433976\\
0,3,1000000,1024,1.105249528,0.144372119\\
1,3,2000000,1024,1.130286825,0.176688922\\
2,3,5000000,1024,1.156233573,0.209432869\\
3,3,10000000,1024,1.057720206,0.080958049\\
4,3,20000000,1024,1.375082622,0.459518306\\
5,3,50000000,1024,1.725734436,0.787210473\\
6,3,100000000,1024,1.768363138,0.822414566\\
7,3,200000000,1024,1.782523035,0.833920720\\
8,3,500000000,1024,1.777776341,0.830073833\\
9,3,1000000000,1024,1.746604770,0.804553186\\
0,4,1000000,2048,1.176608026,0.234633783\\
1,4,2000000,2048,1.084259077,0.116709521\\
2,4,5000000,2048,1.182429267,0.241753884\\
3,4,10000000,2048,1.033533786,0.047585550\\
4,4,20000000,2048,1.338547342,0.420668165\\
5,4,50000000,2048,1.715614911,0.778725760\\
6,4,100000000,2048,1.773731777,0.826787863\\
7,4,200000000,2048,1.778451823,0.830621893\\
8,4,500000000,2048,1.758138388,0.814048633\\
9,4,1000000000,2048,1.731453796,0.791983890\\
0,5,1000000,4096,0.995662860,-0.006270779\\
1,5,2000000,4096,1.003530450,0.005084393\\
2,5,5000000,4096,1.039252672,0.055546458\\
3,5,10000000,4096,0.991389107,-0.012476688\\
4,5,20000000,4096,1.282967118,0.359484195\\
5,5,50000000,4096,1.722402802,0.784422571\\
6,5,100000000,4096,1.765803472,0.820324785\\
7,5,200000000,4096,1.761440649,0.816755865\\
8,5,500000000,4096,1.754416870,0.810991590\\
9,5,1000000000,4096,1.720314186,0.782672073\\
0,6,1000000,8192,0.795691984,-0.329718030\\
1,6,2000000,8192,0.835057316,-0.260052871\\
2,6,5000000,8192,0.850505684,-0.233607216\\
3,6,10000000,8192,0.910438291,-0.135366859\\
4,6,20000000,8192,1.196172952,0.258426001\\
5,6,50000000,8192,1.674146560,0.743425831\\
6,6,100000000,8192,1.763685246,0.818593115\\
7,6,200000000,8192,1.753944767,0.810603317\\
8,6,500000000,8192,1.747255091,0.805090250\\
9,6,1000000000,8192,1.703165390,0.768218538\\
0,7,1000000,16384,0.662831655,-0.593285591\\
1,7,2000000,16384,0.771812197,-0.373678253\\
2,7,5000000,16384,0.773881226,-0.369815934\\
3,7,10000000,16384,0.812782574,-0.299058623\\
4,7,20000000,16384,1.166127222,0.221725192\\
5,7,50000000,16384,1.602274940,0.680121726\\
6,7,100000000,16384,1.759572767,0.815225178\\
7,7,200000000,16384,1.754468680,0.811034194\\
8,7,500000000,16384,1.753507045,0.810243227\\
9,7,1000000000,16384,1.698223491,0.764026334\\
0,8,1000000,32768,0.617337023,-0.695869780\\
1,8,2000000,32768,0.627875208,-0.671450247\\
2,8,5000000,32768,0.683723499,-0.548515084\\
3,8,10000000,32768,0.774874763,-0.367964938\\
4,8,20000000,32768,1.074035041,0.103041062\\
5,8,50000000,32768,1.496425844,0.581520787\\
6,8,100000000,32768,1.587886978,0.667108228\\
7,8,200000000,32768,1.691807513,0.758565433\\
8,8,500000000,32768,1.665737597,0.736161152\\
9,8,1000000000,32768,1.631884434,0.706538893\\
0,9,1000000,65536,0.481333958,-1.054889886\\
1,9,2000000,65536,0.548790527,-0.865672518\\
2,9,5000000,65536,0.613266211,-0.705414631\\
3,9,10000000,65536,0.775397233,-0.366992509\\
4,9,20000000,65536,1.021022294,0.030014367\\
5,9,50000000,65536,1.314567276,0.394587977\\
6,9,100000000,65536,1.435385388,0.521438139\\
7,9,200000000,65536,1.494029309,0.579208451\\
8,9,500000000,65536,1.487479801,0.572870078\\
9,9,1000000000,65536,1.453781224,0.539810178\\
0,10,1000000,131072,0.398149009,-1.328619627\\
1,10,2000000,131072,0.460574220,-1.118494433\\
2,10,5000000,131072,0.602832754,-0.730170289\\
3,10,10000000,131072,0.767820956,-0.381158159\\
4,10,20000000,131072,0.949492322,-0.074771761\\
5,10,50000000,131072,1.168594854,0.224774842\\
6,10,100000000,131072,1.234790760,0.304266592\\
7,10,200000000,131072,1.294869845,0.372807091\\
8,10,500000000,131072,1.306479646,0.385684649\\
9,10,1000000000,131072,1.298409521,0.376745484\\
0,11,1000000,262144,0.491479472,-1.024796936\\
1,11,2000000,262144,0.541749430,-0.884302366\\
2,11,5000000,262144,0.620591124,-0.688285033\\
3,11,10000000,262144,0.784863158,-0.349486956\\
4,11,20000000,262144,0.910639606,-0.135047888\\
5,11,50000000,262144,1.043425543,0.061327656\\
6,11,100000000,262144,1.076570012,0.106442145\\
7,11,200000000,262144,1.095637382,0.131770395\\
8,11,500000000,262144,1.108241319,0.148272062\\
9,11,1000000000,262144,1.112553688,0.153874957\\
};
\addplot[outputmethodlabels,text=black,point meta=explicit]
table[x=x,y=y,meta=ratio,col sep=comma,row sep=\\] {
x,y,ratio\\
3,0,1.190224790\\
0,1,1.314162441\\
2,1,1.274243423\\
3,1,1.150535058\\
0,2,1.166406737\\
1,2,1.229840976\\
2,2,1.214646057\\
3,2,1.115138406\\
0,3,1.105249528\\
1,3,1.130286825\\
2,3,1.156233573\\
3,3,1.057720206\\
0,4,1.176608026\\
1,4,1.084259077\\
2,4,1.182429267\\
3,4,1.033533786\\
0,5,0.995662860\\
1,5,1.003530450\\
2,5,1.039252672\\
3,5,0.991389107\\
4,5,1.282967118\\
0,6,0.795691984\\
1,6,0.835057316\\
2,6,0.850505684\\
3,6,0.910438291\\
4,6,1.196172952\\
1,7,0.771812197\\
2,7,0.773881226\\
3,7,0.812782574\\
4,7,1.166127222\\
3,8,0.774874763\\
4,8,1.074035041\\
3,9,0.775397233\\
4,9,1.021022294\\
5,9,1.314567276\\
3,10,0.767820956\\
4,10,0.949492322\\
5,10,1.168594854\\
6,10,1.234790760\\
7,10,1.294869845\\
8,10,1.306479646\\
9,10,1.298409521\\
3,11,0.784863158\\
4,11,0.910639606\\
5,11,1.043425543\\
6,11,1.076570012\\
7,11,1.095637382\\
8,11,1.108241319\\
9,11,1.112553688\\
};
\addplot[outputmethodlabels,text=white,point meta=explicit]
table[x=x,y=y,meta=ratio,col sep=comma,row sep=\\] {
x,y,ratio\\
0,0,1.330939772\\
1,0,1.325780956\\
2,0,1.339247599\\
4,0,1.460472085\\
5,0,1.784268845\\
6,0,1.801457027\\
7,0,1.788292701\\
8,0,1.793112175\\
9,0,1.770928464\\
1,1,1.346945478\\
4,1,1.435080433\\
5,1,1.762958846\\
6,1,1.784844892\\
7,1,1.809775596\\
8,1,1.784782409\\
9,1,1.767135380\\
4,2,1.393380665\\
5,2,1.743375144\\
6,2,1.768332629\\
7,2,1.800085492\\
8,2,1.778943164\\
9,2,1.750095896\\
4,3,1.375082622\\
5,3,1.725734436\\
6,3,1.768363138\\
7,3,1.782523035\\
8,3,1.777776341\\
9,3,1.746604770\\
4,4,1.338547342\\
5,4,1.715614911\\
6,4,1.773731777\\
7,4,1.778451823\\
8,4,1.758138388\\
9,4,1.731453796\\
5,5,1.722402802\\
6,5,1.765803472\\
7,5,1.761440649\\
8,5,1.754416870\\
9,5,1.720314186\\
5,6,1.674146560\\
6,6,1.763685246\\
7,6,1.753944767\\
8,6,1.747255091\\
9,6,1.703165390\\
0,7,0.662831655\\
5,7,1.602274940\\
6,7,1.759572767\\
7,7,1.754468680\\
8,7,1.753507045\\
9,7,1.698223491\\
0,8,0.617337023\\
1,8,0.627875208\\
2,8,0.683723499\\
5,8,1.496425844\\
6,8,1.587886978\\
7,8,1.691807513\\
8,8,1.665737597\\
9,8,1.631884434\\
0,9,0.481333958\\
1,9,0.548790527\\
2,9,0.613266211\\
6,9,1.435385388\\
7,9,1.494029309\\
8,9,1.487479801\\
9,9,1.453781224\\
0,10,0.398149009\\
1,10,0.460574220\\
2,10,0.602832754\\
0,11,0.491479472\\
1,11,0.541749430\\
2,11,0.620591124\\
};
\end{axis}
\end{tikzpicture}
\caption{Uniform distribution}
\label{fig:output-method-heatmap-uniform}
\end{subfigure}

\vspace{1mm}

\begin{subfigure}[t]{0.64\textwidth}
\centering
\begin{tikzpicture}
\begin{axis}[outputmethodaxis]
\addplot[matrix plot*,mesh/cols=10,point meta=explicit]
table[x=x,y=y,meta=log2ratio,col sep=comma,row sep=\\] {
x,y,n,R,ratio,log2ratio\\
0,0,1000000,128,1.308830042,0.388277768\\
1,0,2000000,128,1.322502827,0.403270806\\
2,0,5000000,128,1.314056654,0.394027477\\
3,0,10000000,128,1.185388808,0.245360341\\
4,0,20000000,128,1.460474207,0.546436880\\
5,0,50000000,128,1.779058874,0.831114254\\
6,0,100000000,128,1.801657943,0.849325131\\
7,0,200000000,128,1.788858924,0.839039616\\
8,0,500000000,128,1.794803821,0.843826161\\
9,0,1000000000,128,1.766061845,0.820535865\\
0,1,1000000,256,1.325268702,0.406284900\\
1,1,2000000,256,1.340841278,0.423138468\\
2,1,5000000,256,1.240788647,0.311257392\\
3,1,10000000,256,1.138845341,0.187571838\\
4,1,20000000,256,1.409500839,0.495184337\\
5,1,50000000,256,1.755629779,0.811988647\\
6,1,100000000,256,1.785861482,0.836620184\\
7,1,200000000,256,1.804761285,0.851808025\\
8,1,500000000,256,1.786845045,0.837414529\\
9,1,1000000000,256,1.759833777,0.815439167\\
0,2,1000000,512,1.209253743,0.274117003\\
1,2,2000000,512,1.213314904,0.278954037\\
2,2,5000000,512,1.185282915,0.245231457\\
3,2,10000000,512,1.099769382,0.137201027\\
4,2,20000000,512,1.397741078,0.483097136\\
5,2,50000000,512,1.742811963,0.801416921\\
6,2,100000000,512,1.771208753,0.824734257\\
7,2,200000000,512,1.800461228,0.848366532\\
8,2,500000000,512,1.783222577,0.834486787\\
9,2,1000000000,512,1.757406074,0.813447585\\
0,3,1000000,1024,1.119322801,0.162626154\\
1,3,2000000,1024,1.218178974,0.284726108\\
2,3,5000000,1024,1.097355164,0.134030536\\
3,3,10000000,1024,1.086416667,0.119577517\\
4,3,20000000,1024,1.366829715,0.450833517\\
5,3,50000000,1024,1.721684989,0.783821201\\
6,3,100000000,1024,1.774857928,0.827703546\\
7,3,200000000,1024,1.792811593,0.842223883\\
8,3,500000000,1024,1.769019181,0.822949691\\
9,3,1000000000,1024,1.747052169,0.804922690\\
0,4,1000000,2048,1.120483506,0.164121412\\
1,4,2000000,2048,1.183237835,0.242740090\\
2,4,5000000,2048,1.070148418,0.097810896\\
3,4,10000000,2048,1.050585317,0.071193327\\
4,4,20000000,2048,1.330030385,0.411459205\\
5,4,50000000,2048,1.715929488,0.778990270\\
6,4,100000000,2048,1.757382993,0.813428638\\
7,4,200000000,2048,1.771406485,0.824895305\\
8,4,500000000,2048,1.768860644,0.822820392\\
9,4,1000000000,2048,1.729952427,0.790732365\\
0,5,1000000,4096,0.979154637,-0.030391374\\
1,5,2000000,4096,1.015429005,0.022089375\\
2,5,5000000,4096,1.014289282,0.020469177\\
3,5,10000000,4096,0.968062436,-0.046827996\\
4,5,20000000,4096,1.279939506,0.356075625\\
5,5,50000000,4096,1.714006284,0.777372399\\
6,5,100000000,4096,1.757502641,0.813526857\\
7,5,200000000,4096,1.766536633,0.820923667\\
8,5,500000000,4096,1.752280703,0.809233903\\
9,5,1000000000,4096,1.718547932,0.781190090\\
0,6,1000000,8192,0.768084286,-0.380663460\\
1,6,2000000,8192,0.829135926,-0.270319462\\
2,6,5000000,8192,0.854999015,-0.226005338\\
3,6,10000000,8192,0.879534746,-0.185187522\\
4,6,20000000,8192,1.190549419,0.251627508\\
5,6,50000000,8192,1.670934143,0.740654873\\
6,6,100000000,8192,1.760405614,0.815907878\\
7,6,200000000,8192,1.748940600,0.806481291\\
8,6,500000000,8192,1.742549370,0.801199531\\
9,6,1000000000,8192,1.713740054,0.777148293\\
0,7,1000000,16384,0.729057946,-0.455894610\\
1,7,2000000,16384,0.757096774,-0.401450374\\
2,7,5000000,16384,0.785605112,-0.348123778\\
3,7,10000000,16384,0.839627339,-0.252178952\\
4,7,20000000,16384,1.160492838,0.214737619\\
5,7,50000000,16384,1.600834243,0.678823933\\
6,7,100000000,16384,1.730333248,0.791049916\\
7,7,200000000,16384,1.758060530,0.813984744\\
8,7,500000000,16384,1.762066639,0.817268486\\
9,7,1000000000,16384,1.690454357,0.757411064\\
0,8,1000000,32768,0.596490700,-0.745428451\\
1,8,2000000,32768,0.639724861,-0.644476544\\
2,8,5000000,32768,0.680635991,-0.555044653\\
3,8,10000000,32768,0.773366467,-0.370775885\\
4,8,20000000,32768,1.067716134,0.094528139\\
5,8,50000000,32768,1.493224953,0.578431522\\
6,8,100000000,32768,1.651164002,0.723483423\\
7,8,200000000,32768,1.645730260,0.718727893\\
8,8,500000000,32768,1.644307743,0.717480334\\
9,8,1000000000,32768,1.583706264,0.663304778\\
0,9,1000000,65536,0.487032408,-1.037910320\\
1,9,2000000,65536,0.559783890,-0.837058126\\
2,9,5000000,65536,0.623539472,-0.681447205\\
3,9,10000000,65536,0.765358469,-0.385792477\\
4,9,20000000,65536,1.002407488,0.003469097\\
5,9,50000000,65536,1.309422355,0.388930514\\
6,9,100000000,65536,1.440054416,0.526123329\\
7,9,200000000,65536,1.482307938,0.567845188\\
8,9,500000000,65536,1.466432942,0.552311100\\
9,9,1000000000,65536,1.433563005,0.519605312\\
0,10,1000000,131072,0.403485244,-1.309412182\\
1,10,2000000,131072,0.456886159,-1.130093357\\
2,10,5000000,131072,0.593269724,-0.753239934\\
3,10,10000000,131072,0.762370037,-0.391436676\\
4,10,20000000,131072,0.939528026,-0.089991898\\
5,10,50000000,131072,1.176487319,0.234485771\\
6,10,100000000,131072,1.250384604,0.322371919\\
7,10,200000000,131072,1.294476888,0.372369207\\
8,10,500000000,131072,1.293231059,0.370980062\\
9,10,1000000000,131072,1.281866988,0.358246570\\
0,11,1000000,262144,0.482517029,-1.051348237\\
1,11,2000000,262144,0.521697441,-0.938714738\\
2,11,5000000,262144,0.616265458,-0.698376166\\
3,11,10000000,262144,0.766656209,-0.383348319\\
4,11,20000000,262144,0.893474956,-0.162500803\\
5,11,50000000,262144,1.040131295,0.056765650\\
6,11,100000000,262144,1.075245239,0.104665743\\
7,11,200000000,262144,1.095876275,0.132084926\\
8,11,500000000,262144,1.108636234,0.148786066\\
9,11,1000000000,262144,1.113644156,0.155288320\\
};
\addplot[outputmethodlabels,text=black,point meta=explicit]
table[x=x,y=y,meta=ratio,col sep=comma,row sep=\\] {
x,y,ratio\\
0,0,1.308830042\\
2,0,1.314056654\\
3,0,1.185388808\\
2,1,1.240788647\\
3,1,1.138845341\\
0,2,1.209253743\\
1,2,1.213314904\\
2,2,1.185282915\\
3,2,1.099769382\\
0,3,1.119322801\\
1,3,1.218178974\\
2,3,1.097355164\\
3,3,1.086416667\\
0,4,1.120483506\\
1,4,1.183237835\\
2,4,1.070148418\\
3,4,1.050585317\\
0,5,0.979154637\\
1,5,1.015429005\\
2,5,1.014289282\\
3,5,0.968062436\\
4,5,1.279939506\\
0,6,0.768084286\\
1,6,0.829135926\\
2,6,0.854999015\\
3,6,0.879534746\\
4,6,1.190549419\\
2,7,0.785605112\\
3,7,0.839627339\\
4,7,1.160492838\\
3,8,0.773366467\\
4,8,1.067716134\\
3,9,0.765358469\\
4,9,1.002407488\\
5,9,1.309422355\\
3,10,0.762370037\\
4,10,0.939528026\\
5,10,1.176487319\\
6,10,1.250384604\\
7,10,1.294476888\\
8,10,1.293231059\\
9,10,1.281866988\\
3,11,0.766656209\\
4,11,0.893474956\\
5,11,1.040131295\\
6,11,1.075245239\\
7,11,1.095876275\\
8,11,1.108636234\\
9,11,1.113644156\\
};
\addplot[outputmethodlabels,text=white,point meta=explicit]
table[x=x,y=y,meta=ratio,col sep=comma,row sep=\\] {
x,y,ratio\\
1,0,1.322502827\\
4,0,1.460474207\\
5,0,1.779058874\\
6,0,1.801657943\\
7,0,1.788858924\\
8,0,1.794803821\\
9,0,1.766061845\\
0,1,1.325268702\\
1,1,1.340841278\\
4,1,1.409500839\\
5,1,1.755629779\\
6,1,1.785861482\\
7,1,1.804761285\\
8,1,1.786845045\\
9,1,1.759833777\\
4,2,1.397741078\\
5,2,1.742811963\\
6,2,1.771208753\\
7,2,1.800461228\\
8,2,1.783222577\\
9,2,1.757406074\\
4,3,1.366829715\\
5,3,1.721684989\\
6,3,1.774857928\\
7,3,1.792811593\\
8,3,1.769019181\\
9,3,1.747052169\\
4,4,1.330030385\\
5,4,1.715929488\\
6,4,1.757382993\\
7,4,1.771406485\\
8,4,1.768860644\\
9,4,1.729952427\\
5,5,1.714006284\\
6,5,1.757502641\\
7,5,1.766536633\\
8,5,1.752280703\\
9,5,1.718547932\\
5,6,1.670934143\\
6,6,1.760405614\\
7,6,1.748940600\\
8,6,1.742549370\\
9,6,1.713740054\\
0,7,0.729057946\\
1,7,0.757096774\\
5,7,1.600834243\\
6,7,1.730333248\\
7,7,1.758060530\\
8,7,1.762066639\\
9,7,1.690454357\\
0,8,0.596490700\\
1,8,0.639724861\\
2,8,0.680635991\\
5,8,1.493224953\\
6,8,1.651164002\\
7,8,1.645730260\\
8,8,1.644307743\\
9,8,1.583706264\\
0,9,0.487032408\\
1,9,0.559783890\\
2,9,0.623539472\\
6,9,1.440054416\\
7,9,1.482307938\\
8,9,1.466432942\\
9,9,1.433563005\\
0,10,0.403485244\\
1,10,0.456886159\\
2,10,0.593269724\\
0,11,0.482517029\\
1,11,0.521697441\\
2,11,0.616265458\\
};
\end{axis}
\end{tikzpicture}
\caption{Normal distribution}
\label{fig:output-method-heatmap-gaussian}
\end{subfigure}

\caption{Execution-time ratio $\rho$ of the prefix-maximum and binary-search methods}
\label{fig:output-method-heatmap}
\end{figure}

%% file: tables/table_tile_width_relative.tex
\begin{table}[H]
\centering
\caption{Geometric mean of normalized execution time for each tile width}
\label{tab:tile-width-relative-time}
\scriptsize
\setlength{\tabcolsep}{2.2pt}
\renewcommand{\arraystretch}{1.18}
\newcommand{\nottested}{\cellcolor[gray]{0.93}{\color[gray]{0.45}--}}
\begin{tabular}{@{}c*{12}{c}@{}}
\toprule
& \multicolumn{12}{c}{Tile width $T$} \\
\cmidrule(l){2-13}
Range size $R$
& 128 & 256 & 512 & 1{,}024 & 2{,}048 & 4{,}096
& 6{,}144 & 8{,}192 & 12{,}288 & 16{,}384 & 20{,}480 & 24{,}576 \\
\midrule
$2^{7}$  & 1.000 & \nottested & \nottested & \nottested & \nottested & \nottested & \nottested & \nottested & \nottested & \nottested & \nottested & \nottested \\
\addlinespace[1.5pt]
$2^{8}$  & 1.134 & 1.000 & \nottested & \nottested & \nottested & \nottested & \nottested & \nottested & \nottested & \nottested & \nottested & \nottested \\
\addlinespace[1.5pt]
$2^{9}$  & 1.306 & 1.102 & 1.000 & \nottested & \nottested & \nottested & \nottested & \nottested & \nottested & \nottested & \nottested & \nottested \\
\addlinespace[1.5pt]
$2^{10}$ & 1.806 & 1.274 & 1.089 & 1.001 & \nottested & \nottested & \nottested & \nottested & \nottested & \nottested & \nottested & \nottested \\
\addlinespace[1.5pt]
$2^{11}$ & 2.758 & 1.759 & 1.268 & 1.078 & 1.001 & \nottested & \nottested & \nottested & \nottested & \nottested & \nottested & \nottested \\
\addlinespace[1.5pt]
$2^{12}$ & \nottested & 2.627 & 1.688 & 1.213 & 1.054 & 1.001 & \nottested & \nottested & \nottested & \nottested & \nottested & \nottested \\
\addlinespace[1.5pt]
$2^{13}$ & \nottested & \nottested & 2.420 & 1.555 & 1.129 & 1.013 & 1.030 & 1.017 & \nottested & \nottested & \nottested & \nottested \\
\addlinespace[1.5pt]
$2^{14}$ & \nottested & \nottested & \nottested & 2.310 & 1.480 & 1.112 & 1.052 & 1.029 & 1.119 & 1.216 & \nottested & \nottested \\
\addlinespace[1.5pt]
$2^{15}$ & \nottested & \nottested & \nottested & \nottested & 1.951 & 1.278 & 1.127 & 1.010 & 1.057 & 1.297 & 1.323 & 1.376 \\
\addlinespace[1.5pt]
$2^{16}$ & \nottested & \nottested & \nottested & \nottested & \nottested & 1.450 & 1.171 & 1.004 & 1.063 & 1.286 & 1.318 & 1.199 \\
\addlinespace[1.5pt]
$2^{17}$ & \nottested & \nottested & \nottested & \nottested & \nottested & \nottested & \nottested & 1.013 & 1.012 & 1.285 & 1.219 & 1.136 \\
\addlinespace[1.5pt]
$2^{18}$ & \nottested & \nottested & \nottested & \nottested & \nottested & \nottested & \nottested & \nottested & \nottested & 1.202 & 1.123 & 1.001 \\
\bottomrule
\end{tabular}
\end{table}

%% file: tables/table_selected_tile_width.tex
\begin{table}[H]
\centering
\caption{Tile widths and numbers of tiles used in the main evaluation}
\label{tab:selected-tile-width}
\small
\setlength{\tabcolsep}{14pt}
\renewcommand{\arraystretch}{1.12}
\begin{tabular}{ccc}
\toprule
Range size $R$ & Tile width $T$ & Number of tiles $m=\lceil R/T\rceil$ \\
\midrule
$2^{7}$  & 128      & 1  \\
$2^{8}$  & 256      & 1  \\
$2^{9}$  & 512      & 1  \\
$2^{10}$ & 1{,}024  & 1  \\
$2^{11}$ & 2{,}048  & 1  \\
$2^{12}$ & 4{,}096  & 1  \\
$2^{13}$ & 4{,}096  & 2  \\
$2^{14}$ & 8{,}192  & 2  \\
$2^{15}$ & 8{,}192  & 4  \\
$2^{16}$ & 8{,}192  & 8  \\
$2^{17}$ & 12{,}288 & 11 \\
$2^{18}$ & 24{,}576 & 11 \\
\bottomrule
\end{tabular}
\end{table}

%% file: tables/table_comparison_targets.tex
\begin{table}[H]
\centering
\small
\caption{Baselines}
\label{tab:comparison-targets}
\begin{tabular}{p{0.15\linewidth}p{0.48\linewidth}p{0.27\linewidth}}
\toprule
Method & Implementation & Role \\
\midrule
CUB & \texttt{DeviceRadixSort} with \texttt{begin\_bit=0} and \texttt{end\_bit=}$\lceil\log_2 R\rceil$ & Stable-sorting baseline using the known range \\
RefHP & Prior implementation refining H-P sort for GPUs & Unstable counting-sort-based baseline \\
Kolonias et al. & Comparison implementation developed in this study from the algorithm of Kolonias et al. & Baseline with a closely related output-generation method \\
RT-CDF & Implementation that partitions the range and generates output from each tile histogram and CDF & Proposed method \\
\bottomrule
\end{tabular}
\end{table}

%% file: tables/table_measurement_method.tex
\begin{table}[H]
\centering
\small
\caption{Measurement conditions for the main evaluation}
\label{tab:measurement-method}
\begin{tabular}{lp{0.72\linewidth}}
\toprule
Item & Setting \\
\midrule
Input & 32-bit integer array satisfying $0 \le A_i < R$ \\
Range size & $R=2^7,2^8,\ldots,2^{18}$ \\
Input size & $10^6,2\times10^6,\ldots,9\times10^6$, $10^7,2\times10^7,\ldots,9\times10^7$, $10^8,2\times10^8,\ldots,10^9$ \\
Input distribution & Uniform, normal (mean $(R-1)/2$, standard deviation $0.125R$), and all-equal \\
Repetitions & Three independent runs per condition; 30 measurements after warm-up in each run \\
Representative value & Median of 30 measurements per run, followed by the median of the three run medians \\
Measured time & Host-side wall-clock time including allocation and deallocation of method-specific workspace \\
Termination condition & For a fixed range and distribution, omit larger input sizes after the execution time exceeds 1000\,ms \\
\bottomrule
\end{tabular}
\end{table}

%% file: tables/table_n1e8_times_uniform.tex
\begin{table}[H]
\centering
\small
\caption{Execution times and speedups for $n=10^8$ (uniform distribution)}
\label{tab:n1e8-times-uniform}
\setlength{\tabcolsep}{3.5pt}
\begin{tabular}{lrrrrcr}
\toprule
$R$ & RT-CDF [ms] & CUB [ms] & RefHP [ms] & Kolonias et al. [ms] & Best baseline & Best baseline/RT-CDF \\
\midrule
$2^{7}$ & \textbf{0.956} & 1.446 & 2.833 & 73.34 & CUB & 1.51 \\
$2^{8}$ & \textbf{0.961} & 1.444 & 2.849 & 73.52 & CUB & 1.50 \\
$2^{9}$ & \textbf{0.968} & 2.561 & 4.088 & 37.00 & CUB & 2.65 \\
$2^{10}$ & \textbf{0.971} & 2.579 & 4.090 & 18.76 & CUB & 2.66 \\
$2^{11}$ & \textbf{0.982} & 2.584 & 4.091 & 9.651 & CUB & 2.63 \\
$2^{12}$ & \textbf{0.993} & 2.579 & 4.089 & 5.085 & CUB & 2.60 \\
$2^{13}$ & \textbf{1.008} & 2.565 & 4.089 & 3.135 & CUB & 2.54 \\
$2^{14}$ & \textbf{1.012} & 2.597 & 4.084 & 3.517 & CUB & 2.57 \\
$2^{15}$ & \textbf{1.187} & 2.610 & 4.078 & 3.982 & CUB & 2.20 \\
$2^{16}$ & \textbf{1.714} & 2.606 & 4.078 & 4.202 & CUB & 1.52 \\
$2^{17}$ & \textbf{2.975} & 3.414 & 3.953 & 5.798 & CUB & 1.15 \\
$2^{18}$ & 5.528 & \textbf{3.421} & 3.994 & 8.642 & CUB & 0.62 \\
\bottomrule
\end{tabular}
\end{table}

%% file: tables/table_n1e8_times_gaussian.tex
\begin{table}[H]
\centering
\small
\caption{Execution times and speedups for $n=10^8$ (normal distribution)}
\label{tab:n1e8-times-gaussian}
\setlength{\tabcolsep}{3.5pt}
\begin{tabular}{lrrrrcr}
\toprule
$R$ & RT-CDF [ms] & CUB [ms] & RefHP [ms] & Kolonias et al. [ms] & Best baseline & Best baseline/RT-CDF \\
\midrule
$2^{7}$ & \textbf{0.956} & 1.435 & 2.953 & 69.75 & CUB & 1.50 \\
$2^{8}$ & \textbf{0.961} & 1.442 & 2.850 & 73.03 & CUB & 1.50 \\
$2^{9}$ & \textbf{0.968} & 2.582 & 4.080 & 36.77 & CUB & 2.67 \\
$2^{10}$ & \textbf{0.973} & 2.585 & 4.078 & 35.31 & CUB & 2.66 \\
$2^{11}$ & \textbf{0.980} & 2.578 & 4.091 & 25.43 & CUB & 2.63 \\
$2^{12}$ & \textbf{0.992} & 2.578 & 4.094 & 14.48 & CUB & 2.60 \\
$2^{13}$ & \textbf{1.007} & 2.580 & 4.096 & 7.738 & CUB & 2.56 \\
$2^{14}$ & \textbf{1.012} & 2.587 & 4.089 & 4.803 & CUB & 2.56 \\
$2^{15}$ & \textbf{1.186} & 2.590 & 4.089 & 4.293 & CUB & 2.18 \\
$2^{16}$ & \textbf{1.707} & 2.607 & 4.078 & 4.274 & CUB & 1.53 \\
$2^{17}$ & \textbf{2.881} & 3.429 & 3.983 & 5.951 & CUB & 1.19 \\
$2^{18}$ & 5.520 & \textbf{3.419} & 3.990 & 8.690 & CUB & 0.62 \\
\bottomrule
\end{tabular}
\end{table}

%% file: tables/table_n1e8_times_all-equal.tex
\begin{table}[H]
\centering
\small
\caption{Execution times and speedups for $n=10^8$ (all-equal input)}
\label{tab:n1e8-times-all-equal}
\setlength{\tabcolsep}{3.5pt}
\begin{tabular}{lrrrrcr}
\toprule
$R$ & RT-CDF [ms] & CUB [ms] & RefHP [ms] & Kolonias et al. [ms] & Best baseline & Best baseline/RT-CDF \\
\midrule
$2^{7}$ & \textbf{0.956} & 1.430 & 1.831 & 267.40 & CUB & 1.50 \\
$2^{8}$ & \textbf{0.961} & 1.430 & 1.852 & 267.41 & CUB & 1.49 \\
$2^{9}$ & \textbf{0.969} & 2.562 & 3.042 & 267.38 & CUB & 2.64 \\
$2^{10}$ & \textbf{0.974} & 2.576 & 3.042 & 267.38 & CUB & 2.65 \\
$2^{11}$ & \textbf{0.981} & 2.576 & 4.093 & 266.65 & CUB & 2.63 \\
$2^{12}$ & \textbf{0.994} & 2.507 & 4.450 & 267.36 & CUB & 2.52 \\
$2^{13}$ & \textbf{1.018} & 2.509 & 6.839 & 267.39 & CUB & 2.46 \\
$2^{14}$ & \textbf{1.029} & 2.509 & 7.056 & 267.89 & CUB & 2.44 \\
$2^{15}$ & \textbf{1.200} & 2.509 & 11.722 & 267.74 & CUB & 2.09 \\
$2^{16}$ & \textbf{1.688} & 2.506 & 11.656 & 267.91 & CUB & 1.48 \\
$2^{17}$ & \textbf{2.767} & 3.387 & 13.259 & 270.19 & CUB & 1.22 \\
$2^{18}$ & 5.333 & \textbf{3.404} & 15.744 & 273.06 & CUB & 0.64 \\
\bottomrule
\end{tabular}
\end{table}

%% file: figures/fig_runtime_r1024_uniform.tex
\begin{figure}[H]
\centering
\begin{tikzpicture}
\begin{axis}[
width=0.90\linewidth,height=0.54\linewidth,
xmode=log,ymode=log,log basis x=10,log basis y=10,
xlabel={Input size $n$},ylabel={Execution time [ms]},
grid=both,
unbounded coords=jump,
legend style={at={(0.02,0.98)},anchor=north west,fill=white,draw=black},
legend cell align={left}
]
\addplot+[mark=*] table[x=n,y=rtcdf,col sep=comma] {plotdata/runtime_R1024_uniform_rtx4090.dat};
\addlegendentry{RT-CDF}
\addplot+[mark=square*] table[x=n,y=cub,col sep=comma] {plotdata/runtime_R1024_uniform_rtx4090.dat};
\addlegendentry{CUB}
\addplot+[mark=triangle*] table[x=n,y=refhp,col sep=comma] {plotdata/runtime_R1024_uniform_rtx4090.dat};
\addlegendentry{RefHP}
\addplot+[mark=diamond*] table[x=n,y=kolonias,col sep=comma] {plotdata/runtime_R1024_uniform_rtx4090.dat};
\addlegendentry{Kolonias et al.}
\end{axis}
\end{tikzpicture}
\caption{Execution time for uniformly distributed inputs ($R=2^{10}$)}
\label{fig:runtime-r1024-uniform}
\end{figure}

%% file: figures/fig_runtime_r1024_gaussian.tex
\begin{figure}[H]
\centering
\begin{tikzpicture}
\begin{axis}[
width=0.90\linewidth,height=0.54\linewidth,
xmode=log,ymode=log,log basis x=10,log basis y=10,
xlabel={Input size $n$},ylabel={Execution time [ms]},
grid=both,
unbounded coords=jump,
legend style={at={(0.02,0.98)},anchor=north west,fill=white,draw=black},
legend cell align={left}
]
\addplot+[mark=*] table[x=n,y=rtcdf,col sep=comma] {plotdata/runtime_R1024_gaussian_rtx4090.dat};
\addlegendentry{RT-CDF}
\addplot+[mark=square*] table[x=n,y=cub,col sep=comma] {plotdata/runtime_R1024_gaussian_rtx4090.dat};
\addlegendentry{CUB}
\addplot+[mark=triangle*] table[x=n,y=refhp,col sep=comma] {plotdata/runtime_R1024_gaussian_rtx4090.dat};
\addlegendentry{RefHP}
\addplot+[mark=diamond*] table[x=n,y=kolonias,col sep=comma] {plotdata/runtime_R1024_gaussian_rtx4090.dat};
\addlegendentry{Kolonias et al.}
\end{axis}
\end{tikzpicture}
\caption{Execution time for normally distributed inputs ($R=2^{10}$)}
\label{fig:runtime-r1024-gaussian}
\end{figure}

%% file: figures/fig_runtime_r1024_all_equal.tex
\begin{figure}[H]
\centering
\begin{tikzpicture}
\begin{axis}[
width=0.90\linewidth,height=0.54\linewidth,
xmode=log,ymode=log,log basis x=10,log basis y=10,
xlabel={Input size $n$},ylabel={Execution time [ms]},
grid=both,
unbounded coords=jump,
legend style={at={(0.02,0.98)},anchor=north west,fill=white,draw=black},
legend cell align={left}
]
\addplot+[mark=*] table[x=n,y=rtcdf,col sep=comma] {plotdata/runtime_R1024_all_equal_rtx4090.dat};
\addlegendentry{RT-CDF}
\addplot+[mark=square*] table[x=n,y=cub,col sep=comma] {plotdata/runtime_R1024_all_equal_rtx4090.dat};
\addlegendentry{CUB}
\addplot+[mark=triangle*] table[x=n,y=refhp,col sep=comma] {plotdata/runtime_R1024_all_equal_rtx4090.dat};
\addlegendentry{RefHP}
\addplot+[mark=diamond*] table[x=n,y=kolonias,col sep=comma] {plotdata/runtime_R1024_all_equal_rtx4090.dat};
\addlegendentry{Kolonias et al.}
\end{axis}
\end{tikzpicture}
\caption{Execution time for all-equal inputs ($R=2^{10}$)}
\label{fig:runtime-r1024-all-equal}
\end{figure}

%% file: figures/fig_runtime_r131072_uniform.tex
\begin{figure}[H]
\centering
\begin{tikzpicture}
\begin{axis}[
width=0.90\linewidth,height=0.54\linewidth,
xmode=log,ymode=log,log basis x=10,log basis y=10,
xlabel={Input size $n$},ylabel={Execution time [ms]},
grid=both,
unbounded coords=jump,
legend style={at={(0.02,0.98)},anchor=north west,fill=white,draw=black},
legend cell align={left}
]
\addplot+[mark=*] table[x=n,y=rtcdf,col sep=comma] {plotdata/runtime_R131072_uniform_rtx4090.dat};
\addlegendentry{RT-CDF}
\addplot+[mark=square*] table[x=n,y=cub,col sep=comma] {plotdata/runtime_R131072_uniform_rtx4090.dat};
\addlegendentry{CUB}
\addplot+[mark=triangle*] table[x=n,y=refhp,col sep=comma] {plotdata/runtime_R131072_uniform_rtx4090.dat};
\addlegendentry{RefHP}
\addplot+[mark=diamond*] table[x=n,y=kolonias,col sep=comma] {plotdata/runtime_R131072_uniform_rtx4090.dat};
\addlegendentry{Kolonias et al.}
\end{axis}
\end{tikzpicture}
\caption{Execution time for uniformly distributed inputs ($R=2^{17}$)}
\label{fig:runtime-r131072-uniform}
\end{figure}

%% file: figures/fig_runtime_r131072_gaussian.tex
\begin{figure}[H]
\centering
\begin{tikzpicture}
\begin{axis}[
width=0.90\linewidth,height=0.54\linewidth,
xmode=log,ymode=log,log basis x=10,log basis y=10,
xlabel={Input size $n$},ylabel={Execution time [ms]},
grid=both,
unbounded coords=jump,
legend style={at={(0.02,0.98)},anchor=north west,fill=white,draw=black},
legend cell align={left}
]
\addplot+[mark=*] table[x=n,y=rtcdf,col sep=comma] {plotdata/runtime_R131072_gaussian_rtx4090.dat};
\addlegendentry{RT-CDF}
\addplot+[mark=square*] table[x=n,y=cub,col sep=comma] {plotdata/runtime_R131072_gaussian_rtx4090.dat};
\addlegendentry{CUB}
\addplot+[mark=triangle*] table[x=n,y=refhp,col sep=comma] {plotdata/runtime_R131072_gaussian_rtx4090.dat};
\addlegendentry{RefHP}
\addplot+[mark=diamond*] table[x=n,y=kolonias,col sep=comma] {plotdata/runtime_R131072_gaussian_rtx4090.dat};
\addlegendentry{Kolonias et al.}
\end{axis}
\end{tikzpicture}
\caption{Execution time for normally distributed inputs ($R=2^{17}$)}
\label{fig:runtime-r131072-gaussian}
\end{figure}

%% file: figures/fig_runtime_r131072_all_equal.tex
\begin{figure}[H]
\centering
\begin{tikzpicture}
\begin{axis}[
width=0.90\linewidth,height=0.54\linewidth,
xmode=log,ymode=log,log basis x=10,log basis y=10,
xlabel={Input size $n$},ylabel={Execution time [ms]},
grid=both,
unbounded coords=jump,
legend style={at={(0.02,0.98)},anchor=north west,fill=white,draw=black},
legend cell align={left}
]
\addplot+[mark=*] table[x=n,y=rtcdf,col sep=comma] {plotdata/runtime_R131072_all_equal_rtx4090.dat};
\addlegendentry{RT-CDF}
\addplot+[mark=square*] table[x=n,y=cub,col sep=comma] {plotdata/runtime_R131072_all_equal_rtx4090.dat};
\addlegendentry{CUB}
\addplot+[mark=triangle*] table[x=n,y=refhp,col sep=comma] {plotdata/runtime_R131072_all_equal_rtx4090.dat};
\addlegendentry{RefHP}
\addplot+[mark=diamond*] table[x=n,y=kolonias,col sep=comma] {plotdata/runtime_R131072_all_equal_rtx4090.dat};
\addlegendentry{Kolonias et al.}
\end{axis}
\end{tikzpicture}
\caption{Execution time for all-equal inputs ($R=2^{17}$)}
\label{fig:runtime-r131072-all-equal}
\end{figure}

%% file: figures/fig_runtime_r262144_uniform.tex
\begin{figure}[H]
\centering
\begin{tikzpicture}
\begin{axis}[
width=0.90\linewidth,height=0.54\linewidth,
xmode=log,ymode=log,log basis x=10,log basis y=10,
xlabel={Input size $n$},ylabel={Execution time [ms]},
grid=both,
unbounded coords=jump,
legend style={at={(0.02,0.98)},anchor=north west,fill=white,draw=black},
legend cell align={left}
]
\addplot+[mark=*] table[x=n,y=rtcdf,col sep=comma] {plotdata/runtime_R262144_uniform_rtx4090.dat};
\addlegendentry{RT-CDF}
\addplot+[mark=square*] table[x=n,y=cub,col sep=comma] {plotdata/runtime_R262144_uniform_rtx4090.dat};
\addlegendentry{CUB}
\addplot+[mark=triangle*] table[x=n,y=refhp,col sep=comma] {plotdata/runtime_R262144_uniform_rtx4090.dat};
\addlegendentry{RefHP}
\addplot+[mark=diamond*] table[x=n,y=kolonias,col sep=comma] {plotdata/runtime_R262144_uniform_rtx4090.dat};
\addlegendentry{Kolonias et al.}
\end{axis}
\end{tikzpicture}
\caption{Execution time for uniformly distributed inputs ($R=2^{18}$)}
\label{fig:runtime-r262144-uniform}
\end{figure}

%% file: figures/fig_runtime_r262144_gaussian.tex
\begin{figure}[H]
\centering
\begin{tikzpicture}
\begin{axis}[
width=0.90\linewidth,height=0.54\linewidth,
xmode=log,ymode=log,log basis x=10,log basis y=10,
xlabel={Input size $n$},ylabel={Execution time [ms]},
grid=both,
unbounded coords=jump,
legend style={at={(0.02,0.98)},anchor=north west,fill=white,draw=black},
legend cell align={left}
]
\addplot+[mark=*] table[x=n,y=rtcdf,col sep=comma] {plotdata/runtime_R262144_gaussian_rtx4090.dat};
\addlegendentry{RT-CDF}
\addplot+[mark=square*] table[x=n,y=cub,col sep=comma] {plotdata/runtime_R262144_gaussian_rtx4090.dat};
\addlegendentry{CUB}
\addplot+[mark=triangle*] table[x=n,y=refhp,col sep=comma] {plotdata/runtime_R262144_gaussian_rtx4090.dat};
\addlegendentry{RefHP}
\addplot+[mark=diamond*] table[x=n,y=kolonias,col sep=comma] {plotdata/runtime_R262144_gaussian_rtx4090.dat};
\addlegendentry{Kolonias et al.}
\end{axis}
\end{tikzpicture}
\caption{Execution time for normally distributed inputs ($R=2^{18}$)}
\label{fig:runtime-r262144-gaussian}
\end{figure}

%% file: figures/fig_runtime_r262144_all_equal.tex
\begin{figure}[H]
\centering
\begin{tikzpicture}
\begin{axis}[
width=0.90\linewidth,height=0.54\linewidth,
xmode=log,ymode=log,log basis x=10,log basis y=10,
xlabel={Input size $n$},ylabel={Execution time [ms]},
grid=both,
unbounded coords=jump,
legend style={at={(0.02,0.98)},anchor=north west,fill=white,draw=black},
legend cell align={left}
]
\addplot+[mark=*] table[x=n,y=rtcdf,col sep=comma] {plotdata/runtime_R262144_all_equal_rtx4090.dat};
\addlegendentry{RT-CDF}
\addplot+[mark=square*] table[x=n,y=cub,col sep=comma] {plotdata/runtime_R262144_all_equal_rtx4090.dat};
\addlegendentry{CUB}
\addplot+[mark=triangle*] table[x=n,y=refhp,col sep=comma] {plotdata/runtime_R262144_all_equal_rtx4090.dat};
\addlegendentry{RefHP}
\addplot+[mark=diamond*] table[x=n,y=kolonias,col sep=comma] {plotdata/runtime_R262144_all_equal_rtx4090.dat};
\addlegendentry{Kolonias et al.}
\end{axis}
\end{tikzpicture}
\caption{Execution time for all-equal inputs ($R=2^{18}$)}
\label{fig:runtime-r262144-all-equal}
\end{figure}

%% file: figures/fig_speedup_uniform.tex
\begin{figure}[H]
\centering
\begin{subfigure}[t]{0.94\linewidth}
\centering
\begin{tikzpicture}
\begin{axis}[
width=0.95\linewidth,height=0.32\linewidth,
xmode=log,log basis x=10,
xlabel={Input size $n$},ylabel={Speedup},
ymin=0,ymax=4.8,
grid=both,
legend style={at={(0.98,0.98)},anchor=north east,legend columns=2,fill=white,draw=black},
legend cell align={left},
ytick={0,1,2,3,4}
]
\addplot+[mark=*] table[x=n,y=r128,col sep=comma] {plotdata/speedup_uniform_rtx4090.dat};
\addlegendentry{$R=2^7$}
\addplot+[mark=square*] table[x=n,y=r256,col sep=comma] {plotdata/speedup_uniform_rtx4090.dat};
\addlegendentry{$R=2^8$}
\addplot[dashed,forget plot] coordinates {(1000000,1) (1000000000,1)};
\end{axis}
\end{tikzpicture}
\caption{Ranges requiring one CUB pass ($R\le 2^8$)}
\end{subfigure}
\vspace{0.5em}
\begin{subfigure}[t]{0.94\linewidth}
\centering
\begin{tikzpicture}
\begin{axis}[
width=0.95\linewidth,height=0.32\linewidth,
xmode=log,log basis x=10,
xlabel={Input size $n$},ylabel={Speedup},
ymin=0,ymax=4.8,
grid=both,
legend style={at={(0.98,0.98)},anchor=north east,legend columns=2,fill=white,draw=black},
legend cell align={left},
ytick={0,1,2,3,4}
]
\addplot+[mark=*] table[x=n,y=r512,col sep=comma] {plotdata/speedup_uniform_rtx4090.dat};
\addlegendentry{$R=2^9$}
\addplot+[mark=square*] table[x=n,y=r4096,col sep=comma] {plotdata/speedup_uniform_rtx4090.dat};
\addlegendentry{$R=2^{12}$}
\addplot+[mark=triangle*] table[x=n,y=r32768,col sep=comma] {plotdata/speedup_uniform_rtx4090.dat};
\addlegendentry{$R=2^{15}$}
\addplot+[mark=diamond*] table[x=n,y=r65536,col sep=comma] {plotdata/speedup_uniform_rtx4090.dat};
\addlegendentry{$R=2^{16}$}
\addplot[dashed,forget plot] coordinates {(1000000,1) (1000000000,1)};
\end{axis}
\end{tikzpicture}
\caption{Representative ranges requiring two CUB passes ($2^9\le R\le 2^{16}$)}
\end{subfigure}
\vspace{0.5em}
\begin{subfigure}[t]{0.94\linewidth}
\centering
\begin{tikzpicture}
\begin{axis}[
width=0.95\linewidth,height=0.32\linewidth,
xmode=log,log basis x=10,
xlabel={Input size $n$},ylabel={Speedup},
ymin=0,ymax=4.8,
grid=both,
legend style={at={(0.98,0.98)},anchor=north east,legend columns=2,fill=white,draw=black},
legend cell align={left},
ytick={0,1,2,3,4}
]
\addplot+[mark=*] table[x=n,y=r131072,col sep=comma] {plotdata/speedup_uniform_rtx4090.dat};
\addlegendentry{$R=2^{17}$}
\addplot+[mark=square*] table[x=n,y=r262144,col sep=comma] {plotdata/speedup_uniform_rtx4090.dat};
\addlegendentry{$R=2^{18}$}
\addplot[dashed,forget plot] coordinates {(1000000,1) (1000000000,1)};
\end{axis}
\end{tikzpicture}
\caption{Ranges requiring three CUB passes ($2^{17}\le R\le 2^{18}$)}
\end{subfigure}
\caption{RT-CDF speedup over the fastest baseline for uniformly distributed inputs}
\label{fig:speedup-uniform}
\end{figure}

%% file: figures/fig_speedup_gaussian.tex
\begin{figure}[H]
\centering
\begin{subfigure}[t]{0.94\linewidth}
\centering
\begin{tikzpicture}
\begin{axis}[
width=0.95\linewidth,height=0.32\linewidth,
xmode=log,log basis x=10,
xlabel={Input size $n$},ylabel={Speedup},
ymin=0,ymax=4.8,
grid=both,
legend style={at={(0.98,0.98)},anchor=north east,legend columns=2,fill=white,draw=black},
legend cell align={left},
ytick={0,1,2,3,4}
]
\addplot+[mark=*] table[x=n,y=r128,col sep=comma] {plotdata/speedup_gaussian_rtx4090.dat};
\addlegendentry{$R=2^7$}
\addplot+[mark=square*] table[x=n,y=r256,col sep=comma] {plotdata/speedup_gaussian_rtx4090.dat};
\addlegendentry{$R=2^8$}
\addplot[dashed,forget plot] coordinates {(1000000,1) (1000000000,1)};
\end{axis}
\end{tikzpicture}
\caption{Ranges requiring one CUB pass ($R\le 2^8$)}
\end{subfigure}
\vspace{0.5em}
\begin{subfigure}[t]{0.94\linewidth}
\centering
\begin{tikzpicture}
\begin{axis}[
width=0.95\linewidth,height=0.32\linewidth,
xmode=log,log basis x=10,
xlabel={Input size $n$},ylabel={Speedup},
ymin=0,ymax=4.8,
grid=both,
legend style={at={(0.98,0.98)},anchor=north east,legend columns=2,fill=white,draw=black},
legend cell align={left},
ytick={0,1,2,3,4}
]
\addplot+[mark=*] table[x=n,y=r512,col sep=comma] {plotdata/speedup_gaussian_rtx4090.dat};
\addlegendentry{$R=2^9$}
\addplot+[mark=square*] table[x=n,y=r4096,col sep=comma] {plotdata/speedup_gaussian_rtx4090.dat};
\addlegendentry{$R=2^{12}$}
\addplot+[mark=triangle*] table[x=n,y=r32768,col sep=comma] {plotdata/speedup_gaussian_rtx4090.dat};
\addlegendentry{$R=2^{15}$}
\addplot+[mark=diamond*] table[x=n,y=r65536,col sep=comma] {plotdata/speedup_gaussian_rtx4090.dat};
\addlegendentry{$R=2^{16}$}
\addplot[dashed,forget plot] coordinates {(1000000,1) (1000000000,1)};
\end{axis}
\end{tikzpicture}
\caption{Representative ranges requiring two CUB passes ($2^9\le R\le 2^{16}$)}
\end{subfigure}
\vspace{0.5em}
\begin{subfigure}[t]{0.94\linewidth}
\centering
\begin{tikzpicture}
\begin{axis}[
width=0.95\linewidth,height=0.32\linewidth,
xmode=log,log basis x=10,
xlabel={Input size $n$},ylabel={Speedup},
ymin=0,ymax=4.8,
grid=both,
legend style={at={(0.98,0.98)},anchor=north east,legend columns=2,fill=white,draw=black},
legend cell align={left},
ytick={0,1,2,3,4}
]
\addplot+[mark=*] table[x=n,y=r131072,col sep=comma] {plotdata/speedup_gaussian_rtx4090.dat};
\addlegendentry{$R=2^{17}$}
\addplot+[mark=square*] table[x=n,y=r262144,col sep=comma] {plotdata/speedup_gaussian_rtx4090.dat};
\addlegendentry{$R=2^{18}$}
\addplot[dashed,forget plot] coordinates {(1000000,1) (1000000000,1)};
\end{axis}
\end{tikzpicture}
\caption{Ranges requiring three CUB passes ($2^{17}\le R\le 2^{18}$)}
\end{subfigure}
\caption{RT-CDF speedup over the fastest baseline for normally distributed inputs}
\label{fig:speedup-gaussian}
\end{figure}

%% file: figures/fig_speedup_all_equal.tex
\begin{figure}[H]
\centering
\begin{subfigure}[t]{0.94\linewidth}
\centering
\begin{tikzpicture}
\begin{axis}[
width=0.95\linewidth,height=0.32\linewidth,
xmode=log,log basis x=10,
xlabel={Input size $n$},ylabel={Speedup},
ymin=0,ymax=4.8,
grid=both,
legend style={at={(0.98,0.98)},anchor=north east,legend columns=2,fill=white,draw=black},
legend cell align={left},
ytick={0,1,2,3,4}
]
\addplot+[mark=*] table[x=n,y=r128,col sep=comma] {plotdata/speedup_all_equal_rtx4090.dat};
\addlegendentry{$R=2^7$}
\addplot+[mark=square*] table[x=n,y=r256,col sep=comma] {plotdata/speedup_all_equal_rtx4090.dat};
\addlegendentry{$R=2^8$}
\addplot[dashed,forget plot] coordinates {(1000000,1) (1000000000,1)};
\end{axis}
\end{tikzpicture}
\caption{Ranges requiring one CUB pass ($R\le 2^8$)}
\end{subfigure}
\vspace{0.5em}
\begin{subfigure}[t]{0.94\linewidth}
\centering
\begin{tikzpicture}
\begin{axis}[
width=0.95\linewidth,height=0.32\linewidth,
xmode=log,log basis x=10,
xlabel={Input size $n$},ylabel={Speedup},
ymin=0,ymax=4.8,
grid=both,
legend style={at={(0.98,0.98)},anchor=north east,legend columns=2,fill=white,draw=black},
legend cell align={left},
ytick={0,1,2,3,4}
]
\addplot+[mark=*] table[x=n,y=r512,col sep=comma] {plotdata/speedup_all_equal_rtx4090.dat};
\addlegendentry{$R=2^9$}
\addplot+[mark=square*] table[x=n,y=r4096,col sep=comma] {plotdata/speedup_all_equal_rtx4090.dat};
\addlegendentry{$R=2^{12}$}
\addplot+[mark=triangle*] table[x=n,y=r32768,col sep=comma] {plotdata/speedup_all_equal_rtx4090.dat};
\addlegendentry{$R=2^{15}$}
\addplot+[mark=diamond*] table[x=n,y=r65536,col sep=comma] {plotdata/speedup_all_equal_rtx4090.dat};
\addlegendentry{$R=2^{16}$}
\addplot[dashed,forget plot] coordinates {(1000000,1) (1000000000,1)};
\end{axis}
\end{tikzpicture}
\caption{Representative ranges requiring two CUB passes ($2^9\le R\le 2^{16}$)}
\end{subfigure}
\vspace{0.5em}
\begin{subfigure}[t]{0.94\linewidth}
\centering
\begin{tikzpicture}
\begin{axis}[
width=0.95\linewidth,height=0.32\linewidth,
xmode=log,log basis x=10,
xlabel={Input size $n$},ylabel={Speedup},
ymin=0,ymax=4.8,
grid=both,
legend style={at={(0.98,0.98)},anchor=north east,legend columns=2,fill=white,draw=black},
legend cell align={left},
ytick={0,1,2,3,4}
]
\addplot+[mark=*] table[x=n,y=r131072,col sep=comma] {plotdata/speedup_all_equal_rtx4090.dat};
\addlegendentry{$R=2^{17}$}
\addplot+[mark=square*] table[x=n,y=r262144,col sep=comma] {plotdata/speedup_all_equal_rtx4090.dat};
\addlegendentry{$R=2^{18}$}
\addplot[dashed,forget plot] coordinates {(1000000,1) (1000000000,1)};
\end{axis}
\end{tikzpicture}
\caption{Ranges requiring three CUB passes ($2^{17}\le R\le 2^{18}$)}
\end{subfigure}
\caption{RT-CDF speedup over the fastest baseline for all-equal inputs}
\label{fig:speedup-all-equal}
\end{figure}

%% file: tables/table_winner_map_uniform.tex
\begin{table}[H]
  \centering
  \caption{Fastest methods at representative input sizes (uniform distribution)}
  \label{tab:winner-map-uniform}
  \begin{tabular}{ccccc}
    \toprule
    & \multicolumn{4}{c}{Input size $n$} \\
    \cmidrule(lr){2-5}
    Range size $R$
      & $10^6$
      & $10^7$
      & $10^8$
      & $10^9$ \\
    \midrule
    $2^7$    & RT-CDF & RT-CDF & RT-CDF & RT-CDF \\
    $2^8$    & RT-CDF & RT-CDF & RT-CDF & RT-CDF \\
    $2^9$    & RT-CDF & RT-CDF & RT-CDF & RT-CDF \\
    $2^{10}$ & RT-CDF & RT-CDF & RT-CDF & RT-CDF \\
    $2^{11}$ & RT-CDF & RT-CDF & RT-CDF & RT-CDF \\
    $2^{12}$ & RT-CDF & RT-CDF & RT-CDF & RT-CDF \\
    $2^{13}$ & RefHP  & RT-CDF & RT-CDF & RT-CDF \\
    $2^{14}$ & RefHP  & RT-CDF & RT-CDF & RT-CDF \\
    $2^{15}$ & RefHP  & RT-CDF & RT-CDF & RT-CDF \\
    $2^{16}$ & RefHP  & RT-CDF & RT-CDF & RT-CDF \\
    $2^{17}$ & RefHP  & RefHP  & RT-CDF & RT-CDF \\
    $2^{18}$ & RefHP  & CUB    & CUB    & RefHP  \\
    \bottomrule
  \end{tabular}
\end{table}

%% file: tables/table_winner_map_gaussian.tex
\begin{table}[H]
  \centering
  \caption{Fastest methods at representative input sizes (normal distribution)}
  \label{tab:winner-map-gaussian}
  \begin{tabular}{ccccc}
    \toprule
    & \multicolumn{4}{c}{Input size $n$} \\
    \cmidrule(lr){2-5}
    Range size $R$
      & $10^6$
      & $10^7$
      & $10^8$
      & $10^9$ \\
    \midrule
    $2^7$    & RT-CDF & RT-CDF & RT-CDF & RT-CDF \\
    $2^8$    & RT-CDF & RT-CDF & RT-CDF & RT-CDF \\
    $2^9$    & RT-CDF & RT-CDF & RT-CDF & RT-CDF \\
    $2^{10}$ & RT-CDF & RT-CDF & RT-CDF & RT-CDF \\
    $2^{11}$ & RT-CDF & RT-CDF & RT-CDF & RT-CDF \\
    $2^{12}$ & RT-CDF & RT-CDF & RT-CDF & RT-CDF \\
    $2^{13}$ & RT-CDF & RT-CDF & RT-CDF & RT-CDF \\
    $2^{14}$ & RT-CDF & RT-CDF & RT-CDF & RT-CDF \\
    $2^{15}$ & RefHP  & RT-CDF & RT-CDF & RT-CDF \\
    $2^{16}$ & RefHP  & RT-CDF & RT-CDF & RT-CDF \\
    $2^{17}$ & RefHP  & CUB    & RT-CDF & RT-CDF \\
    $2^{18}$ & RefHP  & CUB    & CUB    & RefHP  \\
    \bottomrule
  \end{tabular}
\end{table}

%% file: tables/table_winner_map_all_equal.tex
\begin{table}[H]
  \centering
  \caption{Fastest methods at representative input sizes (all-equal input)}
  \label{tab:winner-map-all-equal}
  \begin{tabular}{ccccc}
    \toprule
    & \multicolumn{4}{c}{Input size $n$} \\
    \cmidrule(lr){2-5}
    Range size $R$
      & $10^6$
      & $10^7$
      & $10^8$
      & $10^9$ \\
    \midrule
    $2^7$    & RT-CDF & RT-CDF & RT-CDF & RT-CDF \\
    $2^8$    & CUB    & RT-CDF & RT-CDF & RT-CDF \\
    $2^9$    & RT-CDF & RT-CDF & RT-CDF & RT-CDF \\
    $2^{10}$ & RT-CDF & RT-CDF & RT-CDF & RT-CDF \\
    $2^{11}$ & RT-CDF & RT-CDF & RT-CDF & RT-CDF \\
    $2^{12}$ & RT-CDF & RT-CDF & RT-CDF & RT-CDF \\
    $2^{13}$ & RT-CDF & RT-CDF & RT-CDF & RT-CDF \\
    $2^{14}$ & RT-CDF & RT-CDF & RT-CDF & RT-CDF \\
    $2^{15}$ & RT-CDF & RT-CDF & RT-CDF & RT-CDF \\
    $2^{16}$ & RT-CDF & RT-CDF & RT-CDF & RT-CDF \\
    $2^{17}$ & CUB    & CUB    & RT-CDF & RT-CDF \\
    $2^{18}$ & CUB    & CUB    & CUB    & CUB    \\
    \bottomrule
  \end{tabular}
\end{table}

%% file: tables/table_runtime_four_n_uniform.tex
\begingroup
\small
\setlength{\LTleft}{\fill}
\setlength{\LTright}{\fill}
\setlength{\tabcolsep}{7pt}
\renewcommand{\arraystretch}{1.05}
\begin{longtable}{lrrrr}
\caption{Execution times of the methods at representative input sizes (uniform distribution)}
\label{tab:runtime-four-n-uniform}\\
\toprule
$R$ & RT-CDF [ms] & CUB [ms] & RefHP [ms] & Kolonias et al. [ms] \\
\midrule
\endfirsthead

\multicolumn{5}{c}{Table \thetable\ (continued)}\\
\toprule
$R$ & RT-CDF [ms] & CUB [ms] & RefHP [ms] & Kolonias et al. [ms] \\
\midrule
\endhead

\midrule
\multicolumn{5}{r}{Continued on the next page}\\
\endfoot

\bottomrule
\endlastfoot

\multicolumn{5}{l}{\textbf{$n=10^{6}$}} \\
\addlinespace[1pt]
$2^{7}$ & \textbf{0.01772} & 0.02296 & 0.04203 & 0.7402 \\
$2^{8}$ & \textbf{0.01998} & 0.02553 & 0.04174 & 0.7446 \\
$2^{9}$ & \textbf{0.02095} & 0.1034 & 0.04301 & 0.3795 \\
$2^{10}$ & \textbf{0.02190} & 0.1028 & 0.04327 & 0.1989 \\
$2^{11}$ & \textbf{0.02700} & 0.1027 & 0.04134 & 0.1128 \\
$2^{12}$ & \textbf{0.03053} & 0.1025 & 0.04337 & 0.07029 \\
$2^{13}$ & 0.04580 & 0.1042 & \textbf{0.04311} & 0.06548 \\
$2^{14}$ & 0.05649 & 0.1053 & \textbf{0.04325} & 0.09759 \\
$2^{15}$ & 0.06625 & 0.1056 & \textbf{0.04215} & 0.1665 \\
$2^{16}$ & 0.08320 & 0.1067 & \textbf{0.04139} & 0.2489 \\
$2^{17}$ & 0.1261 & 0.1217 & \textbf{0.04245} & 0.4601 \\
$2^{18}$ & 0.2464 & 0.1172 & \textbf{0.1004} & 0.9149 \\
\addlinespace[2pt]
\midrule
\multicolumn{5}{l}{\textbf{$n=10^{7}$}} \\
\addlinespace[1pt]
$2^{7}$ & \textbf{0.07101} & 0.1641 & 0.2198 & 7.327 \\
$2^{8}$ & \textbf{0.07424} & 0.1694 & 0.2218 & 7.345 \\
$2^{9}$ & \textbf{0.07891} & 0.2742 & 0.2263 & 3.694 \\
$2^{10}$ & \textbf{0.08275} & 0.2737 & 0.2254 & 1.871 \\
$2^{11}$ & \textbf{0.08786} & 0.2738 & 0.2278 & 0.9625 \\
$2^{12}$ & \textbf{0.09407} & 0.2735 & 0.2284 & 0.5096 \\
$2^{13}$ & \textbf{0.1105} & 0.2766 & 0.2278 & 0.3295 \\
$2^{14}$ & \textbf{0.1245} & 0.2784 & 0.2274 & 0.3526 \\
$2^{15}$ & \textbf{0.1571} & 0.2822 & 0.2289 & 0.4213 \\
$2^{16}$ & \textbf{0.2172} & 0.2776 & 0.7029 & 0.5276 \\
$2^{17}$ & 0.3335 & 0.3227 & \textbf{0.2309} & 0.7531 \\
$2^{18}$ & 0.7043 & \textbf{0.3231} & 0.6655 & 1.316 \\
\addlinespace[2pt]
\midrule
\multicolumn{5}{l}{\textbf{$n=10^{8}$}} \\
\addlinespace[1pt]
$2^{7}$ & \textbf{0.9558} & 1.446 & 2.833 & 73.34 \\
$2^{8}$ & \textbf{0.9609} & 1.444 & 2.849 & 73.52 \\
$2^{9}$ & \textbf{0.9677} & 2.561 & 4.088 & 37.00 \\
$2^{10}$ & \textbf{0.9710} & 2.579 & 4.090 & 18.76 \\
$2^{11}$ & \textbf{0.9819} & 2.584 & 4.091 & 9.651 \\
$2^{12}$ & \textbf{0.9934} & 2.579 & 4.089 & 5.085 \\
$2^{13}$ & \textbf{1.008} & 2.565 & 4.089 & 3.135 \\
$2^{14}$ & \textbf{1.012} & 2.597 & 4.084 & 3.517 \\
$2^{15}$ & \textbf{1.187} & 2.610 & 4.078 & 3.982 \\
$2^{16}$ & \textbf{1.714} & 2.606 & 4.078 & 4.202 \\
$2^{17}$ & \textbf{2.975} & 3.414 & 3.953 & 5.798 \\
$2^{18}$ & 5.528 & \textbf{3.421} & 3.994 & 8.642 \\
\addlinespace[2pt]
\midrule
\multicolumn{5}{l}{\textbf{$n=10^{9}$}} \\
\addlinespace[1pt]
$2^{7}$ & \textbf{9.695} & 13.49 & 27.77 & 734.5 \\
$2^{8}$ & \textbf{9.724} & 13.56 & 27.90 & 734.5 \\
$2^{9}$ & \textbf{9.753} & 24.72 & 30.13 & 369.7 \\
$2^{10}$ & \textbf{9.824} & 24.79 & 30.13 & 187.4 \\
$2^{11}$ & \textbf{9.897} & 24.76 & 30.12 & 96.07 \\
$2^{12}$ & \textbf{9.970} & 24.78 & 30.12 & 50.34 \\
$2^{13}$ & \textbf{10.15} & 24.81 & 30.09 & 30.97 \\
$2^{14}$ & \textbf{10.20} & 24.92 & 30.02 & 34.32 \\
$2^{15}$ & \textbf{11.88} & 25.00 & 29.93 & 38.46 \\
$2^{16}$ & \textbf{16.79} & 25.01 & 29.86 & 40.16 \\
$2^{17}$ & \textbf{28.55} & 33.28 & 29.75 & 54.50 \\
$2^{18}$ & 53.14 & 33.30 & \textbf{29.75} & 81.03 \\
\end{longtable}
\endgroup

%% file: tables/table_runtime_four_n_gaussian.tex
\begingroup
\small
\setlength{\LTleft}{\fill}
\setlength{\LTright}{\fill}
\setlength{\tabcolsep}{7pt}
\renewcommand{\arraystretch}{1.05}
\begin{longtable}{lrrrr}
\caption{Execution times of the methods at representative input sizes (normal distribution)}
\label{tab:runtime-four-n-gaussian}\\
\toprule
$R$ & RT-CDF [ms] & CUB [ms] & RefHP [ms] & Kolonias et al. [ms] \\
\midrule
\endfirsthead

\multicolumn{5}{c}{Table \thetable\ (continued)}\\
\toprule
$R$ & RT-CDF [ms] & CUB [ms] & RefHP [ms] & Kolonias et al. [ms] \\
\midrule
\endhead

\midrule
\multicolumn{5}{r}{Continued on the next page}\\
\endfoot

\bottomrule
\endlastfoot

\multicolumn{5}{l}{\textbf{$n=10^{6}$}} \\
\addlinespace[1pt]
$2^{7}$ & \textbf{0.01812} & 0.02305 & 0.04751 & 0.7089 \\
$2^{8}$ & \textbf{0.02004} & 0.02487 & 0.04623 & 0.7402 \\
$2^{9}$ & \textbf{0.02118} & 0.1017 & 0.04826 & 0.3779 \\
$2^{10}$ & \textbf{0.02213} & 0.1016 & 0.04948 & 0.3643 \\
$2^{11}$ & \textbf{0.02478} & 0.1019 & 0.05229 & 0.2674 \\
$2^{12}$ & \textbf{0.03195} & 0.1014 & 0.05645 & 0.1660 \\
$2^{13}$ & \textbf{0.04552} & 0.1029 & 0.05638 & 0.1115 \\
$2^{14}$ & \textbf{0.05626} & 0.1037 & 0.05788 & 0.1081 \\
$2^{15}$ & 0.06472 & 0.1040 & \textbf{0.04520} & 0.1659 \\
$2^{16}$ & 0.08306 & 0.1046 & \textbf{0.04046} & 0.2481 \\
$2^{17}$ & 0.1741 & 0.1124 & \textbf{0.09901} & 0.5141 \\
$2^{18}$ & 0.1824 & 0.1130 & \textbf{0.1002} & 0.9306 \\
\addlinespace[2pt]
\midrule
\multicolumn{5}{l}{\textbf{$n=10^{7}$}} \\
\addlinespace[1pt]
$2^{7}$ & \textbf{0.07161} & 0.1645 & 0.2316 & 6.985 \\
$2^{8}$ & \textbf{0.07487} & 0.1686 & 0.2190 & 7.297 \\
$2^{9}$ & \textbf{0.07943} & 0.2746 & 0.2227 & 3.672 \\
$2^{10}$ & \textbf{0.08325} & 0.2758 & 0.2225 & 3.527 \\
$2^{11}$ & \textbf{0.08825} & 0.2745 & 0.2243 & 2.541 \\
$2^{12}$ & \textbf{0.09440} & 0.2756 & 0.2257 & 1.448 \\
$2^{13}$ & \textbf{0.1105} & 0.2775 & 0.2256 & 0.7834 \\
$2^{14}$ & \textbf{0.1247} & 0.2790 & 0.2266 & 0.4759 \\
$2^{15}$ & \textbf{0.1572} & 0.2778 & 0.2256 & 0.4535 \\
$2^{16}$ & \textbf{0.2160} & 0.2725 & 0.7039 & 0.5239 \\
$2^{17}$ & 0.3968 & \textbf{0.3083} & 0.6642 & 0.8433 \\
$2^{18}$ & 0.6153 & \textbf{0.3098} & 0.6592 & 1.319 \\
\addlinespace[2pt]
\midrule
\multicolumn{5}{l}{\textbf{$n=10^{8}$}} \\
\addlinespace[1pt]
$2^{7}$ & \textbf{0.9556} & 1.435 & 2.953 & 69.75 \\
$2^{8}$ & \textbf{0.9613} & 1.442 & 2.850 & 73.03 \\
$2^{9}$ & \textbf{0.9676} & 2.582 & 4.080 & 36.77 \\
$2^{10}$ & \textbf{0.9727} & 2.585 & 4.078 & 35.31 \\
$2^{11}$ & \textbf{0.9803} & 2.578 & 4.091 & 25.43 \\
$2^{12}$ & \textbf{0.9920} & 2.578 & 4.094 & 14.48 \\
$2^{13}$ & \textbf{1.007} & 2.580 & 4.096 & 7.738 \\
$2^{14}$ & \textbf{1.012} & 2.587 & 4.089 & 4.803 \\
$2^{15}$ & \textbf{1.186} & 2.590 & 4.089 & 4.293 \\
$2^{16}$ & \textbf{1.707} & 2.607 & 4.078 & 4.274 \\
$2^{17}$ & \textbf{2.881} & 3.429 & 3.983 & 5.951 \\
$2^{18}$ & 5.520 & \textbf{3.419} & 3.990 & 8.690 \\
\addlinespace[2pt]
\midrule
\multicolumn{5}{l}{\textbf{$n=10^{9}$}} \\
\addlinespace[1pt]
$2^{7}$ & \textbf{9.704} & 13.44 & 29.43 & 696.6 \\
$2^{8}$ & \textbf{9.731} & 13.50 & 29.51 & 729.7 \\
$2^{9}$ & \textbf{9.752} & 24.77 & 32.19 & 367.0 \\
$2^{10}$ & \textbf{9.815} & 24.76 & 32.19 & 352.5 \\
$2^{11}$ & \textbf{9.891} & 24.73 & 34.35 & 253.7 \\
$2^{12}$ & \textbf{9.963} & 24.79 & 32.24 & 144.1 \\
$2^{13}$ & \textbf{10.16} & 24.79 & 32.23 & 76.82 \\
$2^{14}$ & \textbf{10.21} & 24.82 & 32.21 & 47.42 \\
$2^{15}$ & \textbf{12.14} & 24.87 & 33.18 & 41.40 \\
$2^{16}$ & \textbf{17.04} & 24.97 & 33.08 & 40.55 \\
$2^{17}$ & \textbf{27.62} & 33.99 & 31.83 & 56.38 \\
$2^{18}$ & 53.21 & 33.34 & \textbf{31.82} & 80.57 \\
\end{longtable}
\endgroup

%% file: tables/table_runtime_four_n_all-equal.tex
\begingroup
\small
\setlength{\LTleft}{\fill}
\setlength{\LTright}{\fill}
\setlength{\tabcolsep}{7pt}
\renewcommand{\arraystretch}{1.05}
\begin{longtable}{lrrrr}
\caption{Execution times of the methods at representative input sizes (all-equal input)}
\label{tab:runtime-four-n-all-equal}\\
\toprule
$R$ & RT-CDF [ms] & CUB [ms] & RefHP [ms] & Kolonias et al. [ms] \\
\midrule
\endfirsthead

\multicolumn{5}{c}{Table \thetable\ (continued)}\\
\toprule
$R$ & RT-CDF [ms] & CUB [ms] & RefHP [ms] & Kolonias et al. [ms] \\
\midrule
\endhead

\midrule
\multicolumn{5}{r}{Continued on the next page}\\
\endfoot

\bottomrule
\endlastfoot

\multicolumn{5}{l}{\textbf{$n=10^{6}$}} \\
\addlinespace[1pt]
$2^{7}$ & \textbf{0.01813} & 0.01908 & 0.07474 & 2.699 \\
$2^{8}$ & 0.02030 & \textbf{0.01905} & 0.09596 & 2.692 \\
$2^{9}$ & \textbf{0.02140} & 0.09403 & 0.1167 & 2.699 \\
$2^{10}$ & \textbf{0.02225} & 0.09387 & 0.1280 & 2.694 \\
$2^{11}$ & \textbf{0.02495} & 0.09367 & 0.1781 & 2.702 \\
$2^{12}$ & \textbf{0.03237} & 0.09434 & 0.2372 & 2.703 \\
$2^{13}$ & \textbf{0.04279} & 0.09674 & 0.4516 & 2.723 \\
$2^{14}$ & \textbf{0.05246} & 0.09567 & 0.4508 & 2.754 \\
$2^{15}$ & \textbf{0.05246} & 0.09591 & 0.4508 & 2.823 \\
$2^{16}$ & \textbf{0.05248} & 0.09638 & 0.4524 & 2.909 \\
$2^{17}$ & 0.1234 & \textbf{0.09981} & 0.5112 & 3.179 \\
$2^{18}$ & 0.1010 & \textbf{0.1010} & 0.5125 & 3.604 \\
\addlinespace[2pt]
\midrule
\multicolumn{5}{l}{\textbf{$n=10^{7}$}} \\
\addlinespace[1pt]
$2^{7}$ & \textbf{0.07673} & 0.1432 & 0.1461 & 26.76 \\
$2^{8}$ & \textbf{0.07958} & 0.1414 & 0.2218 & 26.76 \\
$2^{9}$ & \textbf{0.08302} & 0.2445 & 0.3717 & 26.76 \\
$2^{10}$ & \textbf{0.08690} & 0.2460 & 0.3782 & 26.76 \\
$2^{11}$ & \textbf{0.09209} & 0.2455 & 0.6833 & 26.82 \\
$2^{12}$ & \textbf{0.09795} & 0.2471 & 0.8374 & 26.77 \\
$2^{13}$ & \textbf{0.1157} & 0.2469 & 0.8814 & 26.83 \\
$2^{14}$ & \textbf{0.1286} & 0.2468 & 1.236 & 26.86 \\
$2^{15}$ & \textbf{0.1446} & 0.2472 & 1.680 & 26.95 \\
$2^{16}$ & \textbf{0.1865} & 0.2417 & 3.496 & 26.99 \\
$2^{17}$ & 0.3432 & \textbf{0.2586} & 4.819 & 27.29 \\
$2^{18}$ & 0.5385 & \textbf{0.2593} & 4.818 & 27.79 \\
\addlinespace[2pt]
\midrule
\multicolumn{5}{l}{\textbf{$n=10^{8}$}} \\
\addlinespace[1pt]
$2^{7}$ & \textbf{0.9559} & 1.430 & 1.831 & 267.4 \\
$2^{8}$ & \textbf{0.9613} & 1.430 & 1.852 & 267.4 \\
$2^{9}$ & \textbf{0.9688} & 2.562 & 3.042 & 267.4 \\
$2^{10}$ & \textbf{0.9736} & 2.576 & 3.042 & 267.4 \\
$2^{11}$ & \textbf{0.9806} & 2.576 & 4.093 & 266.7 \\
$2^{12}$ & \textbf{0.9937} & 2.507 & 4.450 & 267.4 \\
$2^{13}$ & \textbf{1.018} & 2.509 & 6.839 & 267.4 \\
$2^{14}$ & \textbf{1.029} & 2.509 & 7.056 & 267.9 \\
$2^{15}$ & \textbf{1.200} & 2.509 & 11.72 & 267.7 \\
$2^{16}$ & \textbf{1.688} & 2.506 & 11.66 & 267.9 \\
$2^{17}$ & \textbf{2.767} & 3.387 & 13.26 & 270.2 \\
$2^{18}$ & 5.333 & \textbf{3.404} & 15.74 & 273.1 \\
\addlinespace[2pt]
\midrule
\multicolumn{5}{l}{\textbf{$n=10^{9}$}} \\
\addlinespace[1pt]
$2^{7}$ & \textbf{9.701} & 13.33 & 17.29 & -- \\
$2^{8}$ & \textbf{9.731} & 13.33 & 17.29 & -- \\
$2^{9}$ & \textbf{9.767} & 24.55 & 19.78 & -- \\
$2^{10}$ & \textbf{9.830} & 24.55 & 21.87 & -- \\
$2^{11}$ & \textbf{9.904} & 24.54 & 23.98 & -- \\
$2^{12}$ & \textbf{9.969} & 24.54 & 21.88 & -- \\
$2^{13}$ & \textbf{10.15} & 24.55 & 30.26 & -- \\
$2^{14}$ & \textbf{10.22} & 23.87 & 46.93 & -- \\
$2^{15}$ & \textbf{12.05} & 23.87 & 75.89 & -- \\
$2^{16}$ & \textbf{17.06} & 23.87 & 79.74 & -- \\
$2^{17}$ & \textbf{27.66} & 32.90 & 109.9 & -- \\
$2^{18}$ & 51.75 & \textbf{32.90} & 109.6 & -- \\
\end{longtable}
\endgroup

%% file: figures/fig_rtcdf_kernel_breakdown.tex
\begin{figure}[H]
\centering
\pgfplotsset{
  rtcdfbreakdownaxis/.style={
    width=0.72\linewidth,
    height=0.400\linewidth,
    ybar stacked,
    bar width=7pt,
    ymin=0, ymax=6.0,
    ytick={0,1,2,3,4,5,6},
    ylabel={Execution time [ms]},
    xlabel={Range size $R$},
    symbolic x coords={7,8,9,10,11,12,13,14,15,16,17,18},
    xtick=data,
    xticklabels={$2^7$,$2^8$,$2^9$,$2^{10}$,$2^{11}$,$2^{12}$,$2^{13}$,$2^{14}$,$2^{15}$,$2^{16}$,$2^{17}$,$2^{18}$},
    xticklabel style={font=\scriptsize},
    yticklabel style={font=\scriptsize},
    label style={font=\small},
    grid=major,
    major grid style={draw=black!20,line width=0.2pt},
    axis line style={draw=black},
    tick align=outside,
    enlarge x limits=0.025,
  }
}

\begin{subfigure}[t]{0.96\linewidth}
\centering
\begin{tikzpicture}
\begin{axis}[rtcdfbreakdownaxis,legend style={at={(0.5,1.04)},anchor=south,legend columns=2,fill=white,draw=black,font=\scriptsize},legend cell align={left}]
\addplot+[fill=blue!55,draw=black] coordinates {(7,0.483328) (8,0.482512) (9,0.483328) (10,0.483328) (11,0.483200) (12,0.485376) (13,0.500736) (14,0.504192) (15,0.664576) (16,1.167840) (17,2.289664) (18,4.621312)};
\addlegendentry{Histogram construction}
\addplot+[fill=red!55,draw=black] coordinates {(7,0.476368) (8,0.473088) (9,0.482496) (10,0.483328) (11,0.491312) (12,0.514048) (13,0.497584) (14,0.519120) (15,0.489472) (16,0.512304) (17,0.584704) (18,0.834560)};
\addlegendentry{Output generation}
\end{axis}
\end{tikzpicture}
\caption{Uniform distribution}
\label{fig:rtcdf-kernel-breakdown-uniform}
\end{subfigure}
\vspace{0.4em}

\begin{subfigure}[t]{0.96\linewidth}
\centering
\begin{tikzpicture}
\begin{axis}[rtcdfbreakdownaxis]
\addplot+[fill=blue!55,draw=black] coordinates {(7,0.483328) (8,0.482304) (9,0.482944) (10,0.483328) (11,0.482304) (12,0.484432) (13,0.499712) (14,0.488448) (15,0.685168) (16,1.175392) (17,2.271232) (18,4.573680)};
\addplot+[fill=red!55,draw=black] coordinates {(7,0.466528) (8,0.475152) (9,0.478304) (10,0.485376) (11,0.489584) (12,0.499200) (13,0.497152) (14,0.514896) (15,0.529920) (16,0.508928) (17,0.580656) (18,0.816640)};
\end{axis}
\end{tikzpicture}
\caption{Normal distribution}
\label{fig:rtcdf-kernel-breakdown-gaussian}
\end{subfigure}
\vspace{0.4em}

\begin{subfigure}[t]{0.96\linewidth}
\centering
\begin{tikzpicture}
\begin{axis}[rtcdfbreakdownaxis]
\addplot+[fill=blue!55,draw=black] coordinates {(7,0.477184) (8,0.477184) (9,0.477184) (10,0.477184) (11,0.477184) (12,0.478656) (13,0.498128) (14,0.498624) (15,0.680960) (16,1.153360) (17,2.254448) (18,4.540416)};
\addplot+[fill=red!55,draw=black] coordinates {(7,0.466944) (8,0.472112) (9,0.477872) (10,0.483328) (11,0.491520) (12,0.498880) (13,0.520304) (14,0.514464) (15,0.508256) (16,0.516032) (17,0.486608) (18,0.667984)};
\end{axis}
\end{tikzpicture}
\caption{All-equal input}
\label{fig:rtcdf-kernel-breakdown-zero}
\end{subfigure}
\caption{Execution times of the principal RT-CDF stages ($n=10^8$)}
\label{fig:rtcdf-kernel-breakdown}
\end{figure}

%% file: sections/05_conclusion.tex
\section{Conclusion}
\label{sec:conclusion}

We focused on counting-sort-based unstable integer sorting, in which output intervals are obtained from a histogram and its prefix sums, and proposed and evaluated RT-CDF, which partitions the value range into tiles that fit in shared memory.
RT-CDF divides the possible integer range into small intervals, constructs a histogram and a local CDF for each tile, and uses the CDF to determine the value corresponding to every output position.
It therefore generates the output array without computing a prefix maximum over the entire output array.

On an NVIDIA GeForce RTX 4090, we compared RT-CDF with CUB \texttt{DeviceRadixSort}, RefHP, and an implementation based on the algorithm of Kolonias et al. for range sizes from $R=2^7$ to $2^{18}$, input sizes from $n=10^6$ to $10^9$, and uniformly distributed, normally distributed, and all-equal inputs.
RT-CDF outperformed the baselines over a broad set of conditions for small to medium ranges.
For every distribution, its maximum speedup over the fastest baseline occurred at $R=2^9$: 3.70 at $n=9\times10^6$ for the uniform distribution, 3.60 at $n=9\times10^6$ for the normal distribution, and 4.39 at $n=10^6$ for the all-equal input.
For large input sizes, RT-CDF also reduced absolute execution time by several to more than ten milliseconds, rather than only improving the speedup ratio.
Its advantage diminished as the range increased.
At $R=2^{17}$, CUB or RefHP was faster for small inputs, but RT-CDF became fastest for large inputs under all three distributions.
At $R=2^{18}$, RT-CDF was not fastest for any evaluated input size or distribution.
Thus, in the evaluated environment, the performance boundary of RT-CDF lies between $R=2^{17}$ and $R=2^{18}$.
RT-CDF performance differed relatively little among the three input distributions and depended more strongly on the range size.

The timing breakdown showed that histogram construction is the primary cause of performance degradation for large ranges.
Because RT-CDF examines the entire input array for every range tile, the number of input references grows as $O(mn)$ with the number of tiles $m$.
At $R=2^{18}$, the larger tile width also increases shared-memory use per thread block, reducing the number of concurrently resident blocks and lowering occupancy.
As a result, histogram construction occupied most of the measured time at $R=2^{18}$.
Output-generation time, by contrast, varied little with the input distribution and did not degrade substantially even when elements were concentrated in a small portion of the range, as for the normal and all-equal inputs.

Reducing histogram-construction cost is therefore the main challenge in extending RT-CDF to larger ranges.
Because the prefix-maximum method outperformed the binary-search method for some combinations of small inputs and large ranges, adaptively selecting the output-generation method according to the input size and range is another possible improvement.
Finally, because the evaluation used only one NVIDIA GeForce RTX 4090, evaluating optimal tile widths and performance characteristics on other GPU architectures remains future work.

%% file: main.bbl
\begin{thebibliography}{10}

\bibitem{arkhipov2017survey}
Dmitri~I Arkhipov, Di~Wu, Keqin Li, and Amelia~C Regan.
\newblock Sorting with {GPUs}: A survey.
\newblock {\em arXiv preprint arXiv:1709.02520}, 2017.

\bibitem{satish2009gpusort}
Nadathur Satish, Mark Harris, and Michael Garland.
\newblock Designing efficient sorting algorithms for manycore {GPUs}.
\newblock In {\em 2009 IEEE International Symposium on Parallel \& Distributed
  Processing}, pages 1--10. IEEE, 2009.

\bibitem{adinets2022onesweep}
Andy Adinets and Duane Merrill.
\newblock {Onesweep}: A faster least significant digit radix sort for {GPUs}.
\newblock {\em arXiv preprint arXiv:2206.01784}, 2022.

\bibitem{nvidia_cub_radixsort}
{NVIDIA Corporation}.
\newblock {CUB} \texttt{DeviceRadixSort} documentation.
\newblock NVIDIA Documentation.

\bibitem{kolonias2011countsort}
Vasileios Kolonias, Artemios~G Voyiatzis, George Goulas, and Efthymios Housos.
\newblock Design and implementation of an efficient integer count sort in
  {CUDA} {GPUs}.
\newblock {\em Concurrency and Computation: Practice and Experience},
  23(18):2365--2381, 2011.

\bibitem{eisenstat2007sumcrcw}
Stanley~C Eisenstat.
\newblock $o(\log^* n)$ algorithms on a {Sum-CRCW PRAM}.
\newblock {\em Computing}, 79(1):93--97, 2007.

\bibitem{kozakai2021hp}
Seiya Kozakai, Noriyuki Fujimoto, and Koichi Wada.
\newblock Efficient {GPU}-implementation for integer sorting based on histogram
  and prefix-sums.
\newblock In {\em Proceedings of the 50th International Conference on Parallel
  Processing}, pages 1--11, 2021.

\bibitem{takase2024hpcasia}
Kaito Takase, Takumi Hagihara, Noriyuki Fujimoto, and Koichi Wada.
\newblock Efficient {GPU}-implementation of a sorting algorithm based on
  histogram computation and prefix-sums.
\newblock {\em Parallel Processing Letters}, 36(01n02):2650005, 2026.

\bibitem{sun2009countsort}
Weidong Sun and Zongmin Ma.
\newblock Count sort for {GPU} computing.
\newblock In {\em 2009 15th International Conference on Parallel and
  Distributed Systems}, pages 919--924. IEEE, 2009.

\bibitem{faujdar2016countsort}
Neetu Faujdar and SatyaPrakash Ghrera.
\newblock Performance evaluation of parallel count sort using {GPU} computing
  with {CUDA}.
\newblock {\em Indian Journal of Science and Technology}, 9(15):1--12, 2016.

\bibitem{sakharnykh2015histogram}
Nikolay Sakharnykh.
\newblock {GPU} pro tip: Fast histograms using shared atomics on {Maxwell}.
\newblock NVIDIA Technical Blog, 2015.

\bibitem{nugteren2011histogram}
Cedric Nugteren, Gert-Jan van~den Braak, Henk Corporaal, and Bart Mesman.
\newblock High performance predictable histogramming on {GPUs}: exploring and
  evaluating algorithm trade-offs.
\newblock In {\em Proceedings of the Fourth Workshop on General Purpose
  Processing on Graphics Processing Units}, pages 1--8, 2011.

\bibitem{henriksen2020generalizedhistogram}
Troels Henriksen, Sune Hellfritzsch, Ponnuswamy Sadayappan, and Cosmin Oancea.
\newblock Compiling generalized histograms for {GPU}.
\newblock In {\em SC20: International Conference for High Performance
  Computing, Networking, Storage and Analysis}, pages 1--14. IEEE, 2020.

\end{thebibliography}
